\documentclass[floatfix,amsmath,amssymb,amsfonts,bbm,onecolumn,superscriptaddress,letterpaper,nofootinbib]{revtex4}
\usepackage[latin1]{inputenc}
\usepackage{amsmath,hyperref}
\usepackage{amsfonts}
\usepackage{bbm}
\usepackage{verbatim}
\usepackage[pdftex]{graphicx}

\newcommand{\ket}[1]{|#1\rangle}
\newcommand{\bra}[1]{\langle #1|}

\newtheorem{theorem}{Theorem}
\newtheorem{lemma}{Lemma}

\newcommand{\tr}{\text{Tr}}
\newcommand{\Tr}{\text{Tr}}
\newcommand{\half}{\mbox{$\textstyle \frac{1}{2}$}}

\newcommand{\be}{\begin{eqnarray}}
\newcommand{\ee}{\end{eqnarray}}
\newcommand{\ketL}[1]{|#1\rangle\rangle}

\newenvironment{proof}[1][Proof]{\noindent\textbf{#1.} }{\ \rule{0.5em}{0.5em}}

\begin{document}

\author{J.-C. Boileau}
\affiliation{Perimeter Institute for Theoretical Physics, 31 Caroline Street North, Waterloo, ON, N2L 2Y5, Canada}
\affiliation{Institute for Quantum Computing, University of Waterloo, Waterloo, ON, N2L 3G1, Canada}
\author{L. Sheridan}
\affiliation{Institute for Quantum Computing, University of Waterloo, Waterloo, ON, N2L 3G1, Canada}
\author{M. Laforest}
\affiliation{Institute for Quantum Computing, University of Waterloo, Waterloo, ON, N2L 3G1, Canada}
\author{S. D. Bartlett}
\affiliation{School of Physics, The University of Sydney,  Sydney, New South Wales 2006, Australia}

\title{Quantum Reference Frames and the Classification of Rotationally-Invariant Maps}

\date{January 14$^{th}$ 2008}

\begin{abstract}
We give a convenient representation for any map that is covariant with respect to an irreducible representation of $SU(2)$, and use this representation to analyze the evolution of a quantum directional reference frame when it is exploited as a resource for performing quantum operations. We introduce the \emph{moments} of a quantum reference frame, which serve as a complete description of its properties as a frame, and investigate how many times a quantum directional reference frame represented by a spin-$j$ system can be used to perform a certain quantum operation with a given probability of success. We provide a considerable generalization of previous results on the degradation of a reference frame, from which follows a classification of the dynamics of spin-$j$ system under the repeated action of any covariant map with respect to $SU(2)$.

\end{abstract}

\maketitle

\section{Introduction}
\label{sec:intro}

Quantum operations, such as the application of a unitary rotation on a qubit or a projective measurement in some basis, require some form of reference frame.  In most descriptions of quantum experiments, this reference frame is considered to be classical and therefore can be treated as non-dynamical and used repeatedly without disturbance.  The situation is quite different for a quantum reference frame -- measurements and interactions can cause an unknown disturbance on the quantum frame and, consequently, it can reduce the ability of that system to serve as a reference frame in subsequent uses~\cite{GI05, BIM05, brst,py,brs}.

Consider, for example, an apparatus implementing a unitary rotation operation on a qubit.  For concreteness, we consider this qubit to be spin-1/2 system; however, it could equally well be the polarization state of a single photon, a two-level atom, etc.  A standard apparatus uses a classical directional reference frame to define the axis about which the rotation occurs.  Now suppose that due to miniaturization, space constraints, or to improve the speed of the operation, the system that serves as the reference direction is not accurately described classically, but instead requires a quantum treatment using a Hilbert space of finite size.  The quantum operation describing the unitary rotation is then an operation \emph{conditional} on the state of the quantum reference frame.  Two effects will occur as a result of the quantum nature of the reference frame.  First, this conditional operation will not be identical to the classical case due to the inherent uncertainty in the direction of the quantum reference frame.  Second, upon use, the state of the quantum reference frame may experience an unknown disturbance, essentially due to entanglement induced between the reference frame and system as a result of the interaction.

It should be noted that the constraints that lead to the requirement that the reference system be treated quantumly are ones that are expected to arise in a quantum computing architecture.  For example, measuring devices for such a system must couple strongly to specific registers (subsystems), and this could imply a very small device.  Ideally, a measuring apparatus should be well-modeled classically, to reduce noise, however, that may not be compatible with the previous requirement.  Therefore, it is important to consider how a moderately sized system decoheres when being used in this way.  This work is directly applicable to the magnetic resonance force microscopy (MRFM) proposal of~\cite{RBMC04, SGBRZHY95}.  In that scheme a magnet on the scale of tens of nanometers in length is placed on tip of a nanomechanical resonator, which at very low temperatures may allow single-spin
measurements in a solid state quantum computer.  In this regime the assumption of classical behavior of the device leads to a poor approximation.  Similar problems would occur in situations in which Hamiltonians must be applied to specific subsystems without disturbing neighboring ones.

We now formally define the scenario that we are considering.  Let $\rho^{(0)}$ be the initial state of the quantum reference frame.  Let $\sigma^{\otimes n}$ be the state of a reservoir composed of $n$ ordered subsystems on which we will sequentially perform operations using the same quantum reference frame. Using the first subsystem, we apply a quantum operation $\chi$ on the joint system $\rho^{(0)} \otimes \sigma_1$.  We require that the map $\chi$ is \emph{rotationally invariant}, meaning that it is independent of any specific direction in space. Formally, a map $\chi$ is rotationally invariant if and only if $\chi[R(\Omega)(\cdot)R(\Omega)^\dag] = R(\Omega) \chi(\cdot)R(\Omega)^\dag$ for any rotation $\Omega \in SO(3)$ (or more generally $SU(2)$ if half-integral spins are considered) of the joint system.  (Here, $R$ is the unitary representation of the rotation group $SO(3)$ on the joint system.)  Subsequent to the operation, the reduced state of quantum reference frame is $\rho^{(1)}=\Tr_{r_1}[\chi(\rho^{(0)} \otimes \sigma_1)]$, where $\Tr_{r_i}$ denotes the partial trace over the $i^{\rm th}$ subsystem of the reservoir. The first subsystem of the reservoir is then discarded, and the same operation is performed on the second subsystem, but using the updated quantum reference frame $\rho^{(1)}$.  We repeat these steps with the following subsystems of the reservoir, always using the updated quantum reference frame. The state of the quantum reference frame after the $i^{\rm th}$ step is recursively $\rho^{(i+1)}=\Tr_{r_i}[\chi(\rho^{(i)} \otimes \sigma_i)]$. By discarding the previous subsystems of the reservoir at every step, we assume that the reservoir cannot be used to increase our knowledge about the quantum reference frame.

We wish to define a measure of the quality of the quantum reference frame, and to study how it decreases with the number of times the reference frame is used. Defining the quality of a quantum reference can be quite arbitrary and depends on the interests of the experimentalist and on the task at hand.  In this paper, we use the term \emph{quality function} for any function $F$ that is meant to quantify the ability of the reference frame to perform a particular task.  We will set general conditions on such functions, and analyze how such functions evolve with the number of uses of the quantum reference frame. We will also analyze the \emph{longevity} of the quantum reference frame, that is, the number of times it can be used to perform a certain task before its quality falls below a certain threshold.

As in~\cite{brst, py}, a quantum directional reference frame could be used to measure whether a series of spin-$\frac{1}{2}$ particles are parallel or anti-parallel to some classical arrow.  (See Figure~\ref{directionsfig}.)  However, there are a myriad of other possible cases, including measuring the angular momentum of a series of spin-$j'$ systems with $j'>\frac{1}{2}$, or using the quantum reference spin to indicate an axis about which a unitary rotation operation is performed.

\begin{figure}[h!]
\begin{center}
\includegraphics[width=3.5in]{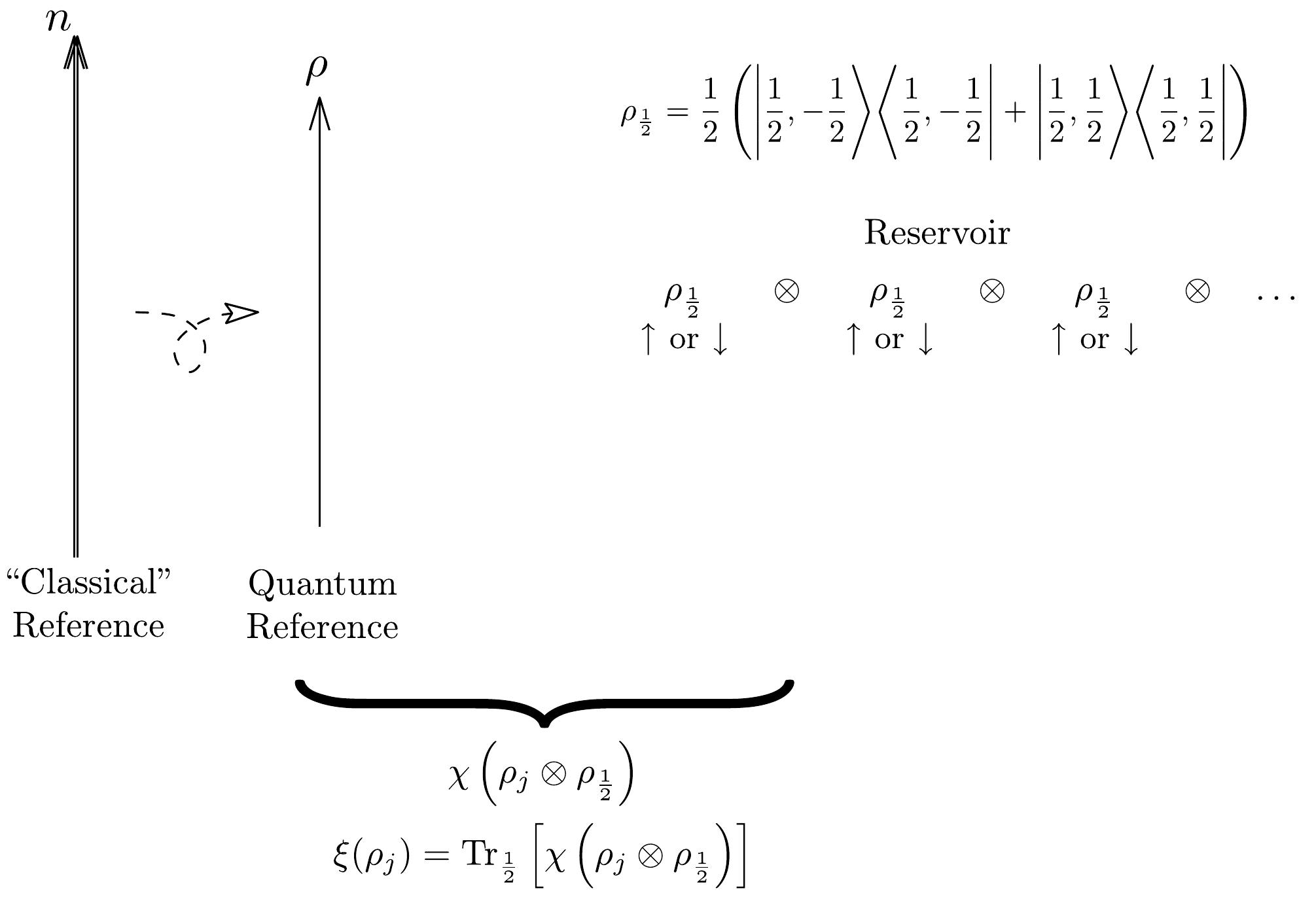}
\caption{The scenario studied by Bartlett \emph{et al.}~\cite{brst}.  The quantum reference spin is used to measure the direction of a series of spin-$\half$ particles in the completely mixed state by means of a projection onto the $J-\half$ or $J+\half$ subspaces.}
\label{directionsfig}
\end{center}
\end{figure}

In this paper, we generalize the results of Refs.~\cite{brst, py} in three different ways.  First, we consider performing operations on a reservoir of \emph{qudits} (quantum systems with $d$-dimensional Hilbert spaces) instead of restricting to the case of qubits ($d=2$) as in~\cite{brst,py}.  Second, we consider all possible rotationally-invariant quantum operations on the reference frame and the system, while the analyses in~\cite{brst,py} were restricted to very particular kind of interaction (specifically, interactions describing measurements).  Note that this joint quantum operation could be a measurement, a conditional unitary, or any other completely-positive map, provided that it is rotationally invariant.  Third, we generalize the concept of the quality function, and place general conditions on its form and evolution. In~\cite{brst,py}, only a particular kind of quality function was considered -- one which was based on the average probability of a correct measurement given a known spin direction -- while now we are interested in any possible quality function.

The paper is organized as follows. In Section~\ref{sec:momentsII}, we introduce a set of parameters for describing a quantum reference frame, which we call {\it moments}, and demonstrate that any quality function must depend only on these moments.  We then present the key result, Theorem~\ref{th1}, which provides a classification of all rotationally-invariant maps.  We give some recursive equations describing the evolution of the moments (Theorem~\ref{th2}).  In Section~\ref{sec:paramsandscaling}, we discuss the longevity of a quantum reference frame for a general quality function under the repetitive application of a rotationally-invariant map, by analyzing the dependance of the evolution of its moments. Section~\ref{sec:examples} is dedicated to a pair of examples intended to illustrate the power of the techniques developed in the earlier sections of the paper.  We explore the specific cases of measuring spin-$1$ particles relative to a quantum reference frame, as well as the use of a reference frame to perform Pauli operations on qubits.  We conclude with some general comments in Section~\ref{sec:conclusion}.

\section{Moments and Fidelity Functions}
\label{sec:momentsII}

A classical reference frame is a local convention for describing the state of a system. A quantum reference frame is a quantum system that carries information about a classical reference frame.  We define a quantum directional reference system as one which is correlated with a classical direction in space represented by the vector $\hat{n}$.  The quantum system that indicates the direction of our reference axis is taken to be a spin-$j$ system, which has dimension $d=2j+1$.  As in~\cite{py}, we consider an arbitrary initial state $\rho^{(0)}_j(\hat{n})$ for the spin-$j$ quantum directional reference frame.  (That is, the quantum reference frame transforms under rotations according to the spin-$j$ representation.)  If we additionally assume that the initial quantum reference frame depends {\it only} on the vector $\hat{n}$, we deduce a symmetry condition: if $R_j(\Omega)$ is the spin-$j$ unitary representation of the rotation $\Omega \in SU(2)$ that transforms $\hat{n}$ to $\hat{n}'$, then $R_j(\Omega) \rho_j^{(0)}(\hat{n})R_j(\Omega)^{-1}=\rho_j^{(0)}(\hat{n}')$.  That is, the set of possible initial states $\{\rho_j^{(0)}(\hat{n}) \}$ is in one-to-one correspondence with the set of directions $\hat{n}$.

Because a vector $\hat{n}$ has an invariance group -- the group of rotations about this axis -- the symmetry condition implies that the state of the quantum directional reference frame is also invariant under rotations about the $\hat{n}$-axis.  This in turn ensures that $\rho_j^{(0)}(\hat{n})$ commutes with the angular momentum operator $J_{\hat{n}}$ in the $\hat{n}$-direction, and thus $\rho_j^{(0)}(\hat{n})$ is diagonal in the basis of eigenstates of $J_{\hat{n}}$.

We now consider how the state of the quantum reference frame is updated through its use in repeated rotationally-invariant operations $\chi$ which act on both the frame and the reservoir. We assume as in~\cite{brst} that the reservoir initiates in a state that is invariant under rotations. (The completely mixed state is one example of such a state, but there exist other non-trivial states).\footnote{Note that, in~\cite{py}, they considered the case where the qubits could also be partially or fully polarized, implying that there was some initial correlation between the systems to be measured and the quantum reference frame. Generalizing their results to maps $\chi$ that do not preserve angular momentum and to reservoirs that are not polarized in the direction of the quantum reference or only composed of spin-$\half$ particles is a challenging problem.  We hope that the framework we provide here will be useful in extending our results to this general situation.} The assumptions that the initial state of the reservoir and the joint quantum operation are spatially invariant imply that the map $\xi$ used to update the state of the quantum directional reference frame, $\rho_j^{(i+1)} = \xi(\rho_j^{(i)})=\Tr_{r_i}[\chi(\rho_j^{(i)} \otimes \sigma_i)]$, is also rotationally invariant.  We denote the restricted map $\xi$ on the quantum directional reference the {\it disturbance} map.  It describes the back-action on the reference which occurs as a result of the interaction.

Let $R_{\hat{n}}(\theta)$ be the rotation by an angle $\theta$ around the vector $\hat{n}$.  Suppose that a rotationally-invariant map $\xi$ is applied to density matrix $\rho_j$ that is diagonal in a basis given by the eigenvectors of $J_{\hat{n}}$.  Because
\begin{equation}
  R_{\hat{n}}(\theta)\xi(\rho_j)R^{-1}_{\hat{n}}(\theta)=\xi(R_{\hat{n}}(\theta)\rho_j R^{-1}_{\hat{n}}(\theta))=\xi(\rho_j),
\end{equation}
then $\xi(\rho_j)$ is also diagonal in the basis given by the eigenvectors of $J_{\hat{n}}$.  As a result, the evolution of the quantum reference frame under the repeated application of the rotationally-invariant map $\xi$ can be described by $2j+1$ equations, one for each of the eigenvalues of $\rho_j$.  A different set of parameters that are equivalent to the eigenvalues is the set of \emph{moments} given by
\begin{equation}
  \{\Tr[\rho_j J_{\hat{n}}^\ell]\  |\  1 \leqslant \ell \leqslant 2j  \}.
  \label{eq:moments}
\end{equation}
Note that only $2j$ moments are necessary because the sum of the eigenvalues must be one.  The use of moments instead of eigenvalues will greatly simplify the analysis of the evolution of the quantum reference frame.

The main motivation for studying the moments of the quantum reference frame is that any quality function $F$ will depends only on these moments.  Consequently, the behavior of the different moments will determine the behavior of the different quality functions.  To see why $F$ depends only on the moments given by equation~\ref{eq:moments}, we first remark that any reasonable form of $F$ depends only on the state of the quantum reference, but that it must respect the relation $F(\rho_j)=F(R_j(\Omega) \rho_j R_j(\Omega)^{-1})$ for all rotations $\Omega \in SU(2)$ and state $\rho_j$ diagonal in the basis composed of the eigenvectors of $J_{\hat{n}}$.  In other words, the quality measure should not be biased such that it favors a quantum reference frame that is pointed in any particular direction relative to some external frame.  All directions must be equally valid.  Therefore, $F$ does not depend on the direction of $\hat{n}$, but only on the eigenvalues or the moments of $\rho_j$.  Note that the set of moments with respect to the direction $\hat{n}$ can be written as a function of the moments with respect to any other direction --- this is simply a change of basis.

Recall that $\rho_j^{(k)}$ as the state of the quantum reference frame after the $k^{th}$ joint operation with a subsystem of the reservoir, and $\rho_j^{(0)}$ as the initial state of the quantum reference frame. By our previous arguments, for a given state $\sigma^{\otimes n}$ of the reservoir and joint quantum operation $\chi$, the moments of $\rho_j^{(k)}$ must be explicit linear functions of the moments of $\rho_j^{(k-1)}$.  These results lead us to the general recursion relation,
\begin{equation}
  \Tr[\rho_j^{(k)}J_{\hat{n}}^\ell] = \sum_{i=0}^{2j} A_{i}^{(\ell)} \Tr[\rho_j^{(k-1)} J_{\hat{n}}^{i}],
  \label{eq:rec}
\end{equation}
where $A_i^{(\ell)}$ are real coefficients.

We now present a theorem which provides a classification of the different rotationally-invariant maps in terms of Lie algebra generators. 
The important question of how to obtain noiseless subsystems for quantum channels constructed from generators of a Lie algebra has been first studied in \cite{LCW98}, and various proprieties of Lie algebra channels were analyzed in \cite{Ritter05}. In contrast to those works, we restrict our attention to $SU(2)$. 
 First, some notation.  Let $J_k$ for $k=x,y,z$ be the angular momentum operators for some arbitrary Cartesian frame.  On a spin-$j$ system, define the map
\begin{equation}
  \label{eq:zetamap}
  \zeta(\rho_j)= \frac{1}{j(j+1)} \sum_{k\in \{x,y,z\}} J_k \rho_j J_k,
\end{equation}
for $\rho_j$ a density matrix of the spin-$j$ system.  Let $\zeta ^{\circ n}$ denote the $n$-fold composition of $\zeta$ for $n>0$, and define $\zeta ^{\circ 0}(\rho_j)= \rho_j$.

\begin{theorem}
  Any map $\xi$ which is invariant with respect to a spin-$j$ irreducible representation of $SU(2)$ has the form
  \begin{equation}
    \xi(\rho_j) = \sum_{n=0}^{2j} q_n \zeta^{\circ n}(\rho_j),
  \end{equation}
  where $\{q_n\}$ are real coefficients.
 \label{th1}
\end{theorem}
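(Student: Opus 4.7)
The plan is to use the representation theory of $SU(2)$ to classify the space of covariant linear maps on the operator space of the spin-$j$ representation. Under the conjugation action $\rho \mapsto R_j(\Omega)\rho R_j(\Omega)^{-1}$, the operators on the $(2j+1)$-dimensional carrier space decompose as $\bigoplus_{L=0}^{2j} V_L$, with each irreducible $V_L$ appearing exactly once and spanned by the irreducible spherical tensor operators $T^{(L)}_m$. By Schur's lemma, any $SU(2)$-covariant map $\xi$ preserves this decomposition and acts as a scalar $c_L$ on each summand, so the space of covariant maps has dimension exactly $2j+1$, with each map determined by a tuple $(c_0,\ldots,c_{2j})$.

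Next, I would verify that $\zeta$ itself is covariant (immediate, since $\{J_k\}$ transforms as a vector under conjugation and $\sum_k J_k(\cdot)J_k$ is manifestly an $SO(3)$-scalar) and compute its eigenvalue $\lambda_L$ on each $V_L$. The cleanest route is through the adjoint-Casimir identity
\[
  \sum_k [J_k,[J_k,\rho]] \;=\; J^2\rho + \rho J^2 - 2\sum_k J_k \rho J_k \;=\; 2j(j+1)\bigl(\rho - \zeta(\rho)\bigr),
\]
using $J^2 = j(j+1)\mathbb{1}$ on the spin-$j$ space. The left-hand side is $\sum_k(\mathrm{ad}_{J_k})^2$, whose eigenvalue on $V_L$ is the Casimir $L(L+1)$, so solving gives $\lambda_L = 1 - L(L+1)/[2j(j+1)]$.

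Since $\zeta^{\circ n}$ acts on $V_L$ as $\lambda_L^n$, expressing an arbitrary covariant $\xi$ with data $(c_L)$ in the form $\sum_n q_n \zeta^{\circ n}$ is equivalent to solving the linear system $c_L = \sum_{n=0}^{2j} q_n \lambda_L^n$ for $L = 0,\ldots,2j$. This is a $(2j+1)\times(2j+1)$ Vandermonde system in the numbers $\lambda_L$, and it is invertible because $L(L+1)$ is strictly increasing on $L \geq 0$, making the $\lambda_L$ pairwise distinct. A unique solution $(q_n)$ therefore exists. Reality of the $q_n$ follows automatically from the reality of the $\lambda_L$ together with the reality of $(c_L)$ for Hermiticity-preserving $\xi$ (and $\zeta^{\circ n}$ is itself Hermiticity-preserving since $J_k$ is Hermitian).

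The only step that requires genuine insight, as opposed to routine bookkeeping, is the identification of the eigenvalues $\lambda_L$; the Casimir trick sidesteps any direct manipulation of Clebsch--Gordan coefficients or $6j$-symbols, and once this is in hand the Schur-lemma dimension count and the Vandermonde invertibility argument finish the proof mechanically. The potential subtlety to watch for is ensuring that the multiplicity-free decomposition of $V_j \otimes V_j^*$ really does give a $(2j+1)$-dimensional commutant, so that the $2j+1$ maps $\{\zeta^{\circ n}\}$ suffice to span the full space of covariant maps rather than a proper subspace.
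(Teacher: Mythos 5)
Your proposal is correct, and its skeleton coincides with the paper's: both rest on Schur's lemma applied to the multiplicity-free Clebsch--Gordan decomposition into irreps $L=0,\dots,2j$, the resulting count that the space of covariant maps is $(2j+1)$-dimensional, and the pairwise distinctness of the eigenvalues of $\zeta$, which makes $\{\zeta^{\circ n}\}_{n=0}^{2j}$ a spanning set via a Vandermonde/fundamental-theorem-of-algebra argument. The differences are in execution. The paper vectorizes the map (Liouville representation $\ketL{\rho}$), so covariance becomes commutation with $R_j^*\otimes R_j$, and it must conjugate by $e^{i\pi J_y}\otimes \mathbbm{1}$ to replace $R_j^*$ by $R_j$ before decomposing; it then extracts the spectrum of $\sum_i J_i^*\otimes J_i$ by expanding the total angular momentum $\mathcal{J}^2$ on the doubled space. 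You instead work intrinsically with the adjoint action and spherical tensor operators, obtaining the same spectrum $\lambda_L = 1 - L(L+1)/[2j(j+1)]$ from the identity $\sum_k[J_k,[J_k,\cdot]] = 2j(j+1)(\mathrm{id}-\zeta)$ --- the same computation with the complex-conjugation bookkeeping eliminated. The genuinely different step is the reality of the $q_n$: the paper derives it from positivity of $\xi$ by an induction on the matrix elements $\bra{j,-j+k}\xi(\ketbra{j,j})\ket{j,-j+k}$, whereas you observe that Hermiticity preservation forces each $c_L$ to be real (since $(T^{(L)}_m)^\dagger \propto T^{(L)}_{-m}$), and real $c_L$ together with real, distinct $\lambda_L$ yields real $q_n$ from the real Vandermonde system; this is cleaner and needs only Hermiticity preservation rather than positivity. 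The one point to make explicit is that some such physicality hypothesis is genuinely required --- for an arbitrary covariant linear map the $c_L$, and hence the $q_n$, may be complex --- so the phrase ``any map'' in the theorem must be read as any Hermiticity-preserving (e.g.\ completely positive) map, exactly as in the paper's own proof.
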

The proof is provided in Appendix~\ref{app:proof}.

Theorem~\ref{th1} allows us an immediate simplification by restricting the number of coefficients $A_i^{(\ell)}$ of equation~(\ref{eq:rec}) that are required to calculate the evolution of the reference state.  The following theorem limits the number of
coefficients ($A_{i}^{(\ell)}$) which determine the evolution of the $J_{\hat{n}}$-moments of a spin-$j$ under the action of a general invariant channel, and can be very useful to study the behavior of moments with low degree:

\begin{theorem}
If $\ell$ is even, then
 \begin{equation}
  {\rm Tr}[\rho_j^{(k)}J_{\hat{n}}^\ell] = \sum_{i=0}^{\ell/2} A_{2i}^{(\ell)} {\rm Tr}[\rho_j^{(k-1)} J_{\hat{n}}^{2i}]
  \label{eq:even}
\end{equation}
and if $\ell$ is odd, then
\begin{equation}
  {\rm Tr}[\rho_j^{(k)}J_{\hat{n}}^\ell] = \sum_{i=1}^{(\ell+1)/2} A_{2i -1}^{(\ell)} {\rm Tr}[\rho_j^{(k-1)} J_{\hat{n}}^{2k - 1}].
  \label{eq:odd}
\end{equation}
\label{th2}
\end{theorem}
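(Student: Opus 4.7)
Proof proposal. The strategy is to invoke Theorem~\ref{th1} to write $\xi = \sum_{n=0}^{2j} q_n \zeta^{\circ n}$ and then transfer each $\zeta^{\circ n}$ onto the observable $J_{\hat{n}}^\ell$ via Hilbert--Schmidt duality. Because the generators $J_x, J_y, J_z$ are Hermitian, $\zeta$ is self-dual, so
\begin{equation}
\Tr[\zeta^{\circ n}(\rho_j^{(k-1)})\, J_{\hat{n}}^\ell] = \Tr[\rho_j^{(k-1)}\, \zeta^{\circ n}(J_{\hat{n}}^\ell)].
\end{equation}
The theorem thus reduces to showing that $\zeta^{\circ n}(J_{\hat{n}}^\ell)$ is a polynomial in $J_{\hat{n}}$ of degree at most $\ell$ whose nonzero terms all have the parity of $\ell$: expanding that polynomial and tracing against $\rho_j^{(k-1)}$ then produces exactly the combinations of moments in equations~(\ref{eq:even})--(\ref{eq:odd}).

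The key computation is the identity
\begin{equation}
\zeta(A) = A - \frac{1}{2j(j+1)} \sum_{k \in \{x,y,z\}} [J_k,[J_k,A]],
\end{equation}
which I would derive from the expansion $\sum_k [J_k,[J_k,A]] = J^2 A + A J^2 - 2 \sum_k J_k A J_k$ together with $J^2 = j(j+1)I$ on the spin-$j$ irrep. The operator $\sum_k [J_k,[J_k,\cdot]]$ is the quadratic Casimir of the adjoint action of $SU(2)$ on operators, so on any irreducible rank-$s$ spherical tensor operator $T^{(s)}_q$ it acts as multiplication by $s(s+1)$. Hence each $T^{(s)}_q$ is an eigenvector of $\zeta$ with eigenvalue $1 - \frac{s(s+1)}{2j(j+1)}$, and therefore of every $\zeta^{\circ n}$.

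Next I would invoke the standard spherical-tensor decomposition of $J_{\hat{n}}^\ell$: since $J_{\hat{n}} \propto T^{(1)}_0$, the $\ell$-fold product decomposes as
\begin{equation}
J_{\hat{n}}^\ell = \sum_{s} c_s^{(\ell)}\, T^{(s)}_0,
\end{equation}
where only ranks $s \leq \ell$ with $s - \ell$ even appear, and each $T^{(s)}_0$ restricted to the spin-$j$ space is a polynomial of degree exactly $s$ in $J_{\hat{n}}$ with parity $s$. Since $\zeta^{\circ n}$ merely rescales each summand, $\zeta^{\circ n}(J_{\hat{n}}^\ell)$ remains a polynomial in $J_{\hat{n}}$ of degree at most $\ell$ with parity $\ell$. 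Summing over $n$ with the coefficients $q_n$ from Theorem~\ref{th1} then assembles the coefficients $A_i^{(\ell)}$ and forces $A_i^{(\ell)} = 0$ whenever $i > \ell$ or $i \not\equiv \ell \pmod{2}$, which is exactly~(\ref{eq:even})--(\ref{eq:odd}).

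The main obstacle is justifying the two spherical-tensor facts just used: (i) the rank truncation $s \leq \ell$ with $s \equiv \ell \pmod{2}$ in the decomposition of $J_{\hat{n}}^\ell$, and (ii) the degree/parity structure of each $T^{(s)}_0$ as a polynomial in $J_{\hat{n}}$. Both follow from the Wigner--Eckart theorem and the $SU(2)$ Clebsch--Gordan series, but a self-contained alternative is available. The parity claim follows immediately from conjugating $J_{\hat{n}} \to -J_{\hat{n}}$ by a $\pi$-rotation about any axis perpendicular to $\hat{n}$, under which $\zeta$ is invariant, so $\zeta^{\circ n}(J_{\hat{n}}^\ell)$ inherits the parity $(-1)^\ell$ as a polynomial in $J_{\hat{n}}$. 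The degree bound can then be verified directly from the double-commutator formula by induction on $\ell$, using $J_x^2 + J_y^2 = j(j+1)I - J_z^2$ to check that $\zeta$ preserves the filtration of diagonal operators by degree in $J_{\hat{n}}$.
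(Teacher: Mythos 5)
Your proposal is correct, and although it opens exactly as the paper does---Theorem~\ref{th1} plus self-duality of $\zeta$ reduce everything to showing that $\zeta^{\circ n}(J_{\hat{n}}^\ell)$ is a polynomial in $J_{\hat{n}}$ of degree at most $\ell$ containing only powers with the parity of $\ell$---you establish that key fact by a genuinely different route. The paper proceeds computationally: it derives the operator recursions $\zeta(J_z^\ell)=\zeta(J_z^{\ell-1})J_z+G(\ell)$ and $G(\ell)=G(\ell-1)J_z+\zeta(J_z^{\ell-2})-J_z^{\ell}/\lambda$ and runs a double induction (on $\ell$, then on $n$), using nothing beyond the commutation relations. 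You instead rewrite $\zeta=\mathrm{id}-\tfrac{1}{2\lambda}\sum_k[J_k,[J_k,\cdot\,]]$ and diagonalize it: the double-commutator sum is the adjoint Casimir, so $\zeta$ multiplies every rank-$s$ irreducible tensor operator by $1-\tfrac{s(s+1)}{2\lambda}$ (which at $s=1$ reproduces the paper's identity $\sum_k J_kJ_{\hat{n}}J_k=(\lambda-1)J_{\hat{n}}$). Your route buys strictly more: it exhibits the full spectrum of $\zeta$, and hence in principle all the coefficients $A_i^{(\ell)}$ once the triangular change of basis between $\{J_{\hat{n}}^{s}\}$ and $\{T^{(s)}_0\}$ is written down, and it explains conceptually why degree and parity are preserved rather than verifying it term by term. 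The price is the two spherical-tensor facts you flag, and your self-contained substitutes are sound: the parity statement does follow from conjugating by a $\pi$-rotation about an axis perpendicular to $\hat{n}$ (under which $\zeta$ is covariant and $J_{\hat{n}}\mapsto -J_{\hat{n}}$), and the degree bound follows because $[J_k,\cdot\,]$ preserves the filtration of the enveloping algebra by word length while any diagonal element of filtration degree $\ell$ restricts on the irrep to a polynomial of degree at most $\ell$ in $J_{\hat{n}}$ (e.g.\ because $J_-^aJ_+^a$ is a degree-$2a$ polynomial in $J_z$)---though this last step is asserted rather than proved in your sketch and should be spelled out. One small point worth making explicit in either approach: that $\zeta^{\circ n}(J_{\hat{n}}^\ell)$ is diagonal in the $J_{\hat{n}}$ eigenbasis at all follows from covariance of $\zeta$ under rotations about $\hat{n}$, by the same argument the paper applies to states in Section~\ref{sec:momentsII}.
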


\begin{proof}
We first define the $z$-axis so that it is parallel to the vector $\hat{n}$ which describes our classical directional reference frame. Consider any rotationally-invariant map $\xi$. By Theorem~\ref{th1}, we can write $\xi(\rho_j)=\sum_{k=0}^{2j} q_k \zeta ^{\circ k}(\rho_j)$ where the coefficients $q_k$ are real numbers and $\zeta$ is given by equation~(\ref{eq:zetamap}). Therefore,
\begin{equation}
  \Tr[\xi(\rho_j) J_{z}^\ell]= \Tr[\rho_j \xi(J_z^\ell)] = \sum_{n=0}^{2j} q_n \Tr[\rho_j \zeta ^{\circ n}(J_z^\ell)].
\end{equation}
To prove our theorem, it is sufficient to show that $\zeta^{\circ n}(J_z^\ell)$ is a polynomial in $J_z$ of degree $\ell$, but where all the powers have the same  parity as $\ell$. Define $\lambda=j(j+1)$ and define the function
\begin{equation}
  G(\ell) \equiv \frac{i}{\lambda}(J_xJ_z^{\ell-1}J_y -J_yJ_z^{\ell-1}J_x).
\end{equation}
Using the commutation relations $[J_a, J_b] = i\epsilon_{a,b,c} J_c$, we can show that
\begin{equation}
  \zeta(J_z^\ell)=\zeta(J_z^{\ell-1})J_z+G(\ell),
  \label{eq:zetaex}
\end{equation}
and
\begin{equation}
  G(\ell) = G(\ell-1) J_z+\zeta(J_z^{\ell-2})-\frac{J_z^{\ell}}{\lambda}.
\end{equation}
By induction, it is easy to prove using those relations that $\zeta(J_z^\ell)$ is a polynomial in $J_z$ of degree $\ell$ where all the powers have the same parity as $\ell$. Using the fact that if $\zeta^{\circ n}(J_z^{\ell})=\sum_{s=0}^{\ell} b _s J^{s}_z$ for some complex $b_s$, then $\zeta^{\circ n+1}(J_z^{\ell})=\zeta(\sum_{s} b _s J^{s}_z)=\sum_{s} b _s \zeta(J^{s}_z)$ and we can prove by induction that $\zeta^{\circ n}(J_z^\ell)$ is a polynomial in $J_z$ of degree $\ell$ where all the powers have same the parity as $\ell$. This concludes our proof.
\end{proof}

\section{Longevity of a Quantum Reference Frame}
\label{sec:paramsandscaling}

Suppose we have a microscopic device for performing an operation or measurement on a quantum spin using a quantum reference frame. Ultimately, after many uses, the quantum reference frame will need to be reinitialized. However, we wish to make as many uses of it as possible before performing this reinitialization, without allowing the accuracy to fall below some allowed threshold. To define this accuracy we will pick some quality function that is suited to our particular purpose. Recall that any quality function $F$ will depend only on the moments of the $J_{\hat{n}}$ operator.

As in~\cite{brst, py}, we are interested in the scaling, with respect to Hilbert space dimension, of how many times a quantum reference frame can be used before the value of its quality function falls below a certain threshold.  We refer to this property as the \emph{longevity} of a quantum reference frame.  There are three important features of our analysis to highlight.  First, whereas the work of~\cite{brst, py} restricts attention to one particular quality function, we will investigate the behavior of arbitrary functions satisfying the invariance relation described in Sec.~\ref{sec:momentsII}.  Because we consider any such function, the longevity of the reference frame can be arbitrary.  However, we are able to prove some general statements about the decay of the moments, and relationships amongst them, and these results can then be used to infer the behavior of any particular quality function.  Second, rather than considering only a particular state of the reference frame (in~\cite{brst, py}, they considered the state $\rho_j^{(0)}= \ket{j,j}_{\hat{n}}\bra{j,j}$ because it was optimal for the task at hand), we consider an arbitrary state.  As such, the initial state $\rho_j^{(0)}$ of the quantum reference frame, as specified by its moments, can depend on $j$ in a quite arbitrary way.  Finally, we consider arbitrary rotationally-invariant joint quantum operations $\chi$.  In~\cite{brst, py}, $\chi$ was chosen to describe a particular measurement that was optimal for some task.  The resulting disturbance map that they considered is
\begin{equation}
  \xi(\rho_j)=(\frac{1}{2}+\frac{1}{2(2j+1)^2})\rho_j + \frac{2j(j+1)}{(2j+1)^2} \zeta(\rho_j) \,,
\end{equation}
where $\zeta(\rho_j)$ is given by equation~(\ref{eq:zetamap}). This
covariant map was also analyzed in~\cite{Ritter05}. 
 In~\cite{brst}, it was shown that the number of times a reference frame can be used before its quality function (which in their case depends only of the first moment) falls below a certain threshold value scales by a factor $j^2$.

In the following, we prove a general theorem about the longevity of quantum reference frames.  We start with Theorem~\ref{th1}, which states that any rotationally-invariant disturbance map can be written as $\xi=\sum_{n=0}^{2j} q_n \zeta ^{\circ n}$.  However, as we now show, if the range and values of the coefficients $q_n$ can depend on $j$ arbitrarily, it is impossible to make any general statements about the longevity.  We identify some natural assumptions for the coefficients, which then allow us to present a general theorem.

First, in general, the number of parameters $q_n$ describing a map increases with $j$; i.e., there are $2j+1$ parameters which can be significant to the evolution.  As an example, consider the complete depolarization map $\chi(\rho_j \otimes \sigma)= \frac{1}{2j+1}I_{2j+1} \otimes \frac{1}{2}I_{1/2}$.  It is straightforward to show that, for this map, all parameters $q_n$ for $0\leq n \leq 2j$ are nonzero.  However, inspired by an argument in \cite{py}, consider a rotationally-invariant map $\chi(\rho_j \otimes \sigma)$ that conserves angular momentum in the $\hat{n}$ direction, meaning that $\chi(J_{\hat{n}}) = J_{\hat{n}}$ where $J_{\hat{n}}$ is the \emph{total} angular momentum operator in the $\hat{n}$ direction.  The change of angular momentum of the quantum reference frame caused by such a disturbance map $\xi$ cannot be higher than the change in angular momentum of the subsystem $\sigma$, which is in turn bounded by the subsystem's dimension $d$.  Given that the map $\zeta(\cdot)=\frac{1}{\lambda}\sum_{k=x,y,z} J_{k} (\cdot) J_{k}$ cannot lower or increase the angular momentum in any direction by more then one unit, the disturbance map's coefficients therefore satisfy $q_n=0$ for all $n\geqslant d$, where $d$ is the dimension of each subsystem $\sigma$ of the reservoir.  Thus, if a rotationally-invariant map conserves angular momentum, then we can bound the number of non-zero coefficients $q_n$ in a way that is independent of $j$.
 
Second, if the parameters $q_n$ can depend on $j$ arbitrarily, then we would not be able to conclude anything about the longevity of the quantum reference frame. To understand this, consider the example
\begin{equation}
  \chi(\rho_j \otimes \sigma)= \alpha_j \rho_j \otimes \sigma + (1- \alpha_j)\sum_{k=|j-k|}^{j+k}\Pi_{j+k}(\rho_j \otimes \sigma)\Pi_{j+k},
\end{equation}
where $\Pi_{j+k}$ is a projection into the subspace of total angular momentum $j+k$. This map conserves the total angular momentum of the joint system. The dependence of $\alpha_j$ on $j$ will determine directly the rate of the decay of the moments in function of $j$.  Because the dependence of $\alpha_j$ can be arbitrary, then the rate of the decay of the moments can also be an arbitrary function of $j$.  We conclude that, in order to make a statement about the longevity of the quantum reference frame, we need to assume more than a bound on the number of the $q_n$ parameters or that the rotationally-invariant  map $\chi$ conserves angular momentum in the direction of $\hat{n}$.  For the following theorem, we assume that each coefficient $q_n$ converges to a constant when $j \rightarrow \infty$. We also need to make an assumption about the rate of convergence: suppose that each $q_n$ can be written as a quotient of two polynomial in $j$, such that the degree of the denominator is at least the degree of the numerator (i.e., $ q_n\leqslant O(1)$).  Finally, we assume that the state $\rho^{(0)}_j$ has the propriety that $\Tr[\rho^{(0)}_j J_{\hat{n}}^{\ell}]=O(j^{\ell})$.  A sufficient condition for this to be true is that there exists a  $\beta>0$ independent of $j$ such that $_{\hat{n}}\bra{j,j}\rho^{(0)}_j \ket{j,j}_{\hat{n}}> \beta$.  (We note that these assumptions are motivated by examples, which we explore in the following section.)

\begin{theorem}[Longevity]
Consider a quantum reference frame, realised as a spin-$j$ system with initial state $\rho^{(0)}_j$, which is used for performing a rotationally-invariant joint operation $\chi$.  If this operation induces a disturbance map $\xi=\sum_{n=0}^{2j} q_n(j) \zeta ^{\circ n}$ that satisfies the following assumptions:
\begin{enumerate}
\item there exists some $n_{max}$ such that $q_n=0$ for all $n\geqslant n_{max}$\,,
\item $ q_n\leqslant O(1)$\,,
\item $\Tr[\rho^{(0)}_j J_{\hat{n}}^{\ell}]=O(j^{\ell})$\,,
\end{enumerate}
then the number of times that such a quantum reference frame can be used before its $\ell^{th}$ moment falls below a certain threshold value scales as $j^2$.
\label{th:long}
\end{theorem}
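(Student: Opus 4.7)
The plan is to diagonalize the disturbance map $\xi$ on the space of operators on the spin-$j$ system using the spherical-tensor-operator basis $\{T^{(L)}_M\}$, since $\xi$ is $SU(2)$-covariant. By Schur's lemma, $\xi$ then acts on each spin-$L$ operator block as a single scalar $\tilde\mu_L(j)$, and the $\ell$-th moment after $k$ uses reduces to a linear combination of the powers $\tilde\mu_L(j)^{\,k}$. The longevity question becomes: how close to unity are these eigenvalues as functions of $j$?

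First I would diagonalize $\zeta$ itself. Using the operator identity
\[
\sum_{k\in\{x,y,z\}} J_k T J_k = \tfrac{1}{2}\bigl(J^2 T + T J^2\bigr) - \tfrac{1}{2}\sum_k [J_k,[J_k,T]],
\]
together with $J^2 = j(j+1)\, I$ on the spin-$j$ representation and the fact that the adjoint Casimir $\sum_k [J_k,[J_k,\cdot]]$ acts as $L(L+1)$ on a spin-$L$ spherical tensor, I read off $\zeta(T^{(L)}_M)=\mu_L T^{(L)}_M$ with $\mu_L = 1 - L(L+1)/(2j(j+1))$. Theorem~\ref{th1} then gives $\tilde\mu_L(j) = \sum_n q_n(j)\,\mu_L^n$. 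Moreover, $\zeta$ is self-adjoint in the Hilbert--Schmidt inner product, so $\xi^{\circ k}$ can be moved to the observable side: $\Tr[\rho_j^{(k)} A] = \Tr[\rho_j^{(0)}\, \xi^{\circ k}(A)]$.

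Next, expand $J_{\hat{n}}^\ell = \sum_L c_{\ell,L}(j)\, T^{(L)}_0$, where by Theorem~\ref{th2} only $L$ with the same parity as $\ell$ and $L\le \ell$ contribute, to obtain
\[
\Tr[\rho_j^{(k)}\, J_{\hat{n}}^\ell] \;=\; \sum_L c_{\ell,L}(j)\, \tilde\mu_L(j)^{\,k}\, \Tr[\rho_j^{(0)}\, T^{(L)}_0].
\]
Trace preservation forces $\sum_n q_n = 1$, whence $\tilde\mu_0 = 1$, so the $L=0$ contribution is a stationary equilibrium. For $L\ge 1$, expanding $\mu_L^n = 1 - n L(L+1)/(2j(j+1)) + O(j^{-4})$ and invoking assumptions 1--2 (only finitely many $q_n$ nonzero, each $O(1)$) yields
\[
\tilde\mu_L(j) = 1 - \frac{\beta_L(j)}{j^2} + O(j^{-4}), \qquad \beta_L(j) = \tfrac{1}{2}L(L+1)\sum_n n\, q_n(j) = O(1),
\]
so $\tilde\mu_L(j)^{\,k} \sim \exp(-\beta_L k/j^2)$ decays from $1$ to its equilibrium value on the timescale $k = \Theta(j^2)$.

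Assumption 3 then supplies the final ingredient: it forces the initial moment to be of order $j^\ell$, so that any fixed threshold set strictly between this initial value and the equilibrium is crossed on exactly the $j^2$ timescale. The main obstacle is ruling out accidental cancellation between different $L$-components in the sum above; specifically, one must confirm that the slowest-decaying contribution (smallest available $L$) has coefficient $c_{\ell,L}(j)\,\Tr[\rho_j^{(0)} T^{(L)}_0]$ of the full scale $j^\ell$ and of the correct sign relative to the equilibrium. This is where the explicit polynomial dependence of the $c_{\ell,L}$ on $j$, the parity restriction from Theorem~\ref{th2}, and the pointwise lower bound $\langle j,j|\rho_j^{(0)}|j,j\rangle > \beta$ underlying assumption 3 must combine; once this bookkeeping is in place, the $j^2$ longevity follows immediately from the exponential analysis above.
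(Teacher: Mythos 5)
Your proposal is correct in substance but takes a genuinely different route from the paper's. The paper never diagonalizes $\zeta$: it works in the Heisenberg picture, using the commutator recursions $\zeta(J_z^\ell)=\zeta(J_z^{\ell-1})J_z+G(\ell)$ and the identity $\sum_k J_k J_{\hat{n}} J_k=(\lambda-1)J_{\hat{n}}$ to show that $\zeta^{\circ n}(J_{\hat{n}}^\ell)$ is a degree-$\ell$ polynomial in $J_{\hat{n}}$ with leading coefficient $A_\ell^{(\ell)}=1-O(1/\lambda)$ and subleading coefficients $O(1)$, and then runs a strong induction on $\ell$ through the recursions of Theorem~\ref{th2}. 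Your spherical-tensor decomposition instead gives the exact spectrum of any covariant map in one stroke: $\zeta(T^{(L)}_M)=\mu_L T^{(L)}_M$ with $\mu_L=1-L(L+1)/(2j(j+1))$ (which correctly reproduces the paper's $(\lambda-1)/\lambda$ at $L=1$), so that $\tilde\mu_L=\sum_n q_n\mu_L^n=1-\Theta(1/j^2)$ for each $L\geq 1$ follows directly from assumptions 1--2 and $\sum_n q_n=1$, with the $L=0$ block identified as the stationary equilibrium. The paper's coefficients $A_i^{(\ell)}$ are exactly the change of basis between $\{J_{\hat{n}}^i\}$ and $\{T^{(L)}_0\}$ conjugated by your diagonal matrix, so your route is sharper and eliminates the induction; what the paper's route buys is elementarity, needing nothing beyond Theorem~\ref{th1} and commutators. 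Both arguments share the same loose ends, which you flag honestly: neither verifies that the slowest-decaying $L$-component carries a coefficient of full scale $j^\ell$ without cancellation against the equilibrium (the paper disposes of this with ``for proper chosen initial state'' and an unexecuted induction), and neither excludes the degenerate possibility $\sum_n n\,q_n\to 0$ (e.g.\ $q_0\to 1$), in which case the per-use decay is slower than $j^{-2}$ and the longevity exceeds $j^2$; unitality and complete positivity at least guarantee $|\tilde\mu_L|\leq 1$, so your exponential picture captures the fastest admissible decay. Your write-up is at least as rigorous as the published proof.
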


\begin{proof}
Noting that the $J_k$ operators are self-adjoint and using the cyclic propriety of the trace, the map $\zeta$ has the property that
\begin{equation}
  \Tr[\zeta(\rho) J_{\hat{n}}^\ell]  = \Tr[\rho \, \zeta(J_{\hat{n}}^\ell)].
\end{equation}
Therefore the moments of angular momentum in the $\hat{n}$ direction after the map $\xi_j$ has been applied to the reference state $\rho_j^{(k)}$ can be expressed as
\begin{equation}
  \Tr[\rho_j^{(k+1)} J_{\hat{n}}^\ell]=  \sum_{n=0}^{2j}q_n \Tr[\rho_j^{(k)} \ \zeta^{\circ n}(J_{\hat{n}}^\ell)]
\end{equation}
Using the commutation relations, the factor $\zeta^{\circ n}(J_{\hat{n}}^\ell)$ can be expanded as a polynomial in $J_{\hat{n}}$ of degree $\ell$. Define $\lambda=j(j+1)$. Using a proof by induction on equation~(\ref{eq:zetaex}) and observing that $\sum_{k\in\{x,y,z\}} J_k J_{\hat{n}} J_k = (\lambda - 1) J_{\hat{n}}$, it is easy to show that the coefficient of the leading term will be $A_{\ell}^{(\ell)}=\sum_{n=0}^{n_{max}}[ q_n(1 - O(\frac{1}{\lambda}))]=1-O\bigl(\frac{1}{\lambda}\bigr)$ where we used the normalization condition $\sum q_n =1$, the assumption that each $q_n$ is $O(1)$ and the fact that $q_n=0$ for $n>2j$. The coefficients of the non-vanishing lower terms will be $A_{i}^{(\ell)}=O(1)$ for $i<\ell$.

This reasoning about the constants $A_k^{(i)}$ is sufficient to characterize the rate of change of the moments with repeat application of the map. To demonstrate this, let $\ell=1$. We want to find the minimum value of $t$ such that $\tr[\rho_j^{(t)} J_{\hat{n}}]<c$, for some constant $c$.  Using equation~(\ref{eq:odd}), we need to solve
\begin{equation}
  c = \Tr[\rho_j^{(0)} J_{\hat{n}} ] \left(1-O\left(\frac{1}{\lambda}\right)\right)^{t_c}.
\end{equation}
Therefore,
\begin{equation}
  t_c= O\left(j^2\right).
\end{equation}
To generalize to higher moment,  observe that $\Tr[\rho^{(0)}_j J_{\hat{n}}^{i}]\leqslant O(j^{i})$ for $i<\ell$. Recall that we assumed that $\Tr[\rho^{(0)}_j J_{\hat{n}}^{\ell}]=O(j^{\ell})$. Using equation~(\ref{eq:odd}), the facts that $A_{\ell}^{(\ell)}=1-O\bigl(\frac{1}{\lambda}\bigr)$ and  $A_{i}^{(\ell)}=O(1)$ for $i<\ell$, we can extend our result to higher odd moments by using strong induction. In other word, we can show that the minimum value of $t$ such that $\tr[\rho_j^{(t)} J_{\hat{n}}^{\ell}]<c$ for proper chosen initial state $\rho^{(0)}_j$  is $O(j^2)$. For even moments, a similar approach using equation~(\ref{eq:even}) can be used to obtain an identical conclusion.
\end{proof}

Note that in the case were the assumptions of Theorem \ref{th:long} fail, but there is a specific dependance on $j$ of the map $\chi$ and the state $\rho_j^{(0)}$, equations~(\ref{eq:even}) and~(\ref{eq:odd}) may still be useful tools to study the longevity on quantum reference frame.

\section{Examples}
\label{sec:examples}

\subsection{Measuring Spin-1 Systems}

As an example, consider the measurement of spin systems relative to a directional reference frame.  This problem was investigated in~\cite{brst,py} for the case of spin-1/2 systems.  Specifically, suppose that the reservoir consists of spin-$s$ systems which are initially in the completely mixed state.  The quantum reference direction is a spin-$j$ system with $j > s$, aligned with some classical direction $\hat{n}$.  The goal is to measure each spin to determine its component $\mu$ along the vector $\hat{n}$ by performing a joint measurement on both the system from the reservoir and the reference system.   The optimal rotationally-invariant joint operation for this task~\cite{brs04} is a POVM given by the projectors $\{\Pi_{j+\mu} | \mu \in \{-s,\ldots,s\} \}$ where $\Pi_{j+\mu}$ corresponds to a projection onto the subspace where the total angular momentum of the reference spin with the measured spin is $j+\mu$.  Define the fidelity $F_{s}$ as the probability that, when the reference is used to measure a system in a \emph{known} state $|s,\mu\rangle_{\hat{n}}$ by means of the above POVM on the joint system the result it returns is correct, assuming that all values of $\mu$ are equiprobable.

Consider briefly the case of $s=1/2$, as in~\cite{brst,py}.  The fidelity function can be expressed in terms of the first moment of the reference frame \emph{after} being used $k$ times, as
\begin{equation}
  F_{\half}^{(k)}=\tr\left[\frac{1}{2}\sum_{\mu\in \{-\half, \half\}} \Pi_{j+\mu} (\rho_j^{(k)} \otimes \ket{\mu}\bra{\mu})\right]= \frac{1}{2} + \frac{1}{2j+1} \Tr[\rho_j^{(k)}J_z]\,.
\end{equation} 
Using Theorem~\ref{th2} and a simple calculation to evaluate $A^{(1)}_1$, as defined in equation~(\ref{eq:odd}), we find the fidelity in terms of the \emph{original} value of the first moment to be
\begin{equation}
  F_{\half}^{(k)} =\frac{1}{2} + \frac{\tr[\rho_j^{(0)}J_{z}]}{2j+1}\left(1-\frac{2}{(2j+1)^2}\right)^k.
\end{equation}
Note that the fidelity function in this case depends only on the first moment.

We now generalize to the case where we wish to measure the angular momentum of a spin-$1$ particle along some direction using a quantum direction reference frame.  (Larger spins can be handled by similar means.)  First, we determine an expression for the fidelity in terms of the moments:
\begin{equation}
  F_1^{(k)} = \tr\left[\frac{1}{3}\sum_\mu \Pi_{j+\mu} (\rho_j^{(k)} \otimes \ket{\mu}\bra{\mu})\right].
  \label{eq:fidelity}
\end{equation}
where $\mu \in \{-1, 0, 1\}$.  Consider the disturbance map $\xi$ as described in Section~\ref{sec:momentsII}:
\begin{equation}
  \xi(\rho_j^{(k)}) = \frac{1}{3} \sum_{\mu'',\mu',\mu} \bra{\mu''} \Pi_{j+\mu'} \ket{\mu} \ \rho_j^{(k)} \ \bra{\mu} \Pi_{j+\mu'} \ket{\mu''}.
\end{equation}
Explicit expressions for the reduced operators on the reference system $\bra{\mu''} \Pi_{j+\mu'} \ket{\mu}$ can be found from the Clebsch-Gordon coefficients.  The expression for the fidelity in terms of $j$ and the moments is
\begin{equation}
  F_1^{(k)} = \frac{1}{6} + \frac{\left[(2j+1)^2 - 2\right]}{6j(j+1)(2j+1)} \tr[\rho_j^{(k)} J_z] + \frac{1}{2j(j+1)} \tr[\rho_j^{(k)} J_z^2].
\end{equation}
Note that the fidelity in this case depends on the second moment as well as the first. Using Theorem~\ref{th2}, we have:
\begin{eqnarray}
  \tr[\xi(\rho_j^{(k)}) J_z] &=& A_1^{(1)} \tr[\rho_j^{(k)} J_z],   \\
  \tr[\xi(\rho_j^{(k)}) J_z^2] &=& A_0^{(2)} + A_2^{(2)} \tr[\rho_j^{(k)} J_z^2].
\end{eqnarray}

Substituting some particular values of $\rho_j$ (i.e., $\rho_j= \ket{j, m}_{\hat{n}} \bra{j, m}$ for $m=j,j{-}1$) and solving the system of equations, we obtain:
\begin{eqnarray}
  A_1^{(1)} &=& \frac{3j^4 + 6j^3 - j^2 - 4j + 2}{3j^2(j+1)^2} = 1 - \frac{4}{3j^2} + O\left(\frac{1}{j^3}\right),   \\
  A_0^{(2)} &=&  \frac{2\left(8j^4 + 16j^3 - 8j - 3\right)}{3j (j+1) (2j+1)^2} = \frac{4}{3} - \frac{5}{3j^2} + O\left(\frac{1}{j^3}\right),  \\
  A_2^{(2)} &=& \frac{4j^6 + 12j^5 - 3j^4 - 26j^3 + j^2 + 16j + 6}{j^2 (j+1)^2 (2j+1)^2} = 1- \frac{4}{j^2} + O\left(\frac{1}{j^3}\right).
\end{eqnarray}
(The series apply as $j\rightarrow \infty$.)  In terms of these constants, the fidelity evolves with $k$, the number of applications of the map corresponding to the measurement of the $z$ projection of the spin-1 system, as:
\begin{equation}
  F_1^{(k)} = \frac{1}{6} + \frac{\left[(2j+1)^2 - 2\right]}{6j(j+1)(2J+1)}\left(A_1^{(1)}\right)^k \tr[\rho_j^{(0)} J_z] + \frac{1}{2j(j+1)} \left( A_0^{(2)} \frac{1-\left(A_2^{(2)}\right)^k}{1-A_2^{(2)}} + \left(A_2^{(2)}\right)^k \tr[\rho_j^{(0)} J_z^2]\right).
\end{equation}

\begin{figure}[h!]
\begin{center}
\includegraphics[width=3.8in]{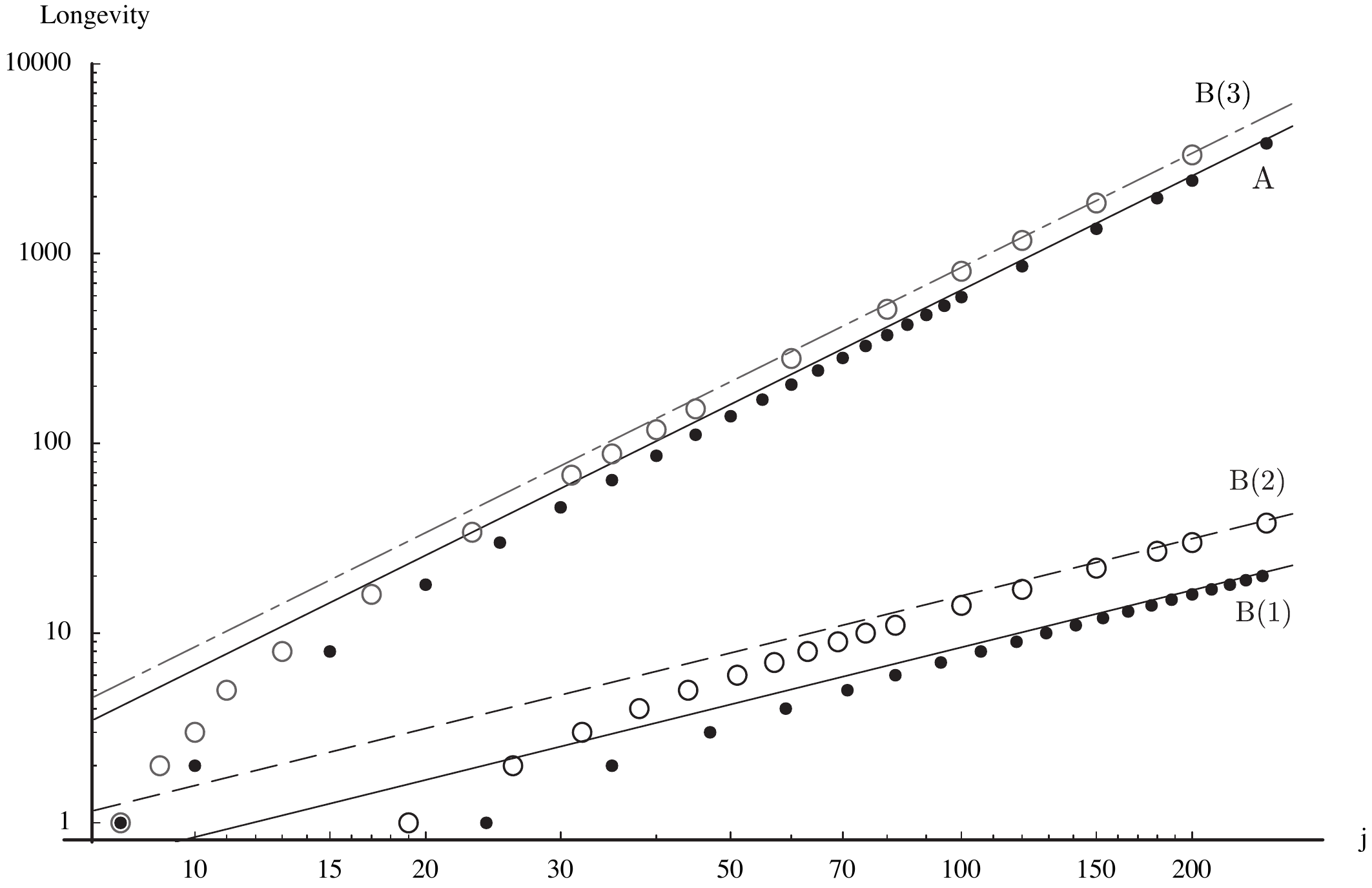}
\caption{A plot of the longevity, as defined in Section~\ref{sec:paramsandscaling}, against the reference system size $j$ for the scenario described in Example A (points converging to the solid line of gradient 2), and Example B, Method 1 (points converging to the solid line of gradient 1), Method 2 (circles converging to the dashed line of gradient 1), and Method 3 (grey circles converging to grey dot-dashed line of gradient 2).  The points are found numerically.  The gradients of the lines in A and B(3) are 2, indicating longevity scaling as $O(j^2)$, and the gradients of the lines in B(1) and B(2) are 1, indicating a scaling of $O(j)$.}
\label{fig:logplots}
\end{center}
\end{figure}

Because the assumptions given in Section~\ref{sec:paramsandscaling} are satisfied, the longevity of the quantum reference frame must scale as $O(j^2)$. This result is easily verified numerically.  See Fig.~\ref{fig:logplots}.

\subsection{Implementing a Pauli Operator on a Qubit}
\label{subsec:GateFidelity}

The quantum reference frame can also be used to define an axis for the purpose of implementing some desired unitary gate.  Having only one axis, we are restricted to the set of gates corresponding to rotations about this axis.  Suppose, for example, we want to implement a Pauli $Z$ operation on a qubit,
\begin{equation}
  Z=\left(\begin{array}{cc}
  1 & 0  \\
  0 & -1
  \end{array}\right),
\end{equation}
using a quantum reference direction $\rho_j$ in the form of a spin-$j$ system to define the $z$-axis. To do this, we must implement a rotationally-invariant operation on the combined reference and system.  Unlike the measurement considered in previous examples, in this case we wish to implement a unitary operation on the system that is conditional on the state of the quantum reference direction.  In the language of quantum information, we want to perform a unitary rotation about a direction \emph{programmed} by the state of the reference frame.  However, that because the size of the reference frame is bounded and the number of programmed operations (possible directions) is continuous, then the impossibility of a ``programmable quantum gate''~\cite{NielsenChuang97} ensures that an ideal such conditional unitary cannot exist.  It is possible to approximate it, as we now show.

\subsubsection{Method 1}

Consider the following operation, to be performed on the joint state of a quantum reference frame $\rho_j$ and a single qubit $\sigma$.  First, define an covariant measurement on the spin-$j$ reference frame described by the POVM
\begin{equation}
  \{ \Lambda(\Omega) = (2j+1)R(\Omega)|e\rangle\langle e|R(\Omega)^{\dagger} \,, \ \Omega \in SU(2) \} \,,
\end{equation}
where $|e\rangle$ is a normalized state of a spin-$j$ system.  The measurement effects of this POVM satisfy the normalization condition
\begin{equation}
  \int_\Omega d\mu_\Omega \Lambda(\Omega) = I_j \,,
\end{equation}
where $I_j$ is the identity on the spin-$j$ Hilbert space and $d\mu_{\Omega}$ represents the Haar measure over $SU(2)$. 
 This measurement is performed on the quantum reference frame state $\rho_j$, and then conditional upon obtaining the outcome $\Omega$, the operation $Z(\Omega) = R(\Omega) Z R^{-1}(\Omega)$ is applied to the system.  If we subsequently discard the information about the measurement result $\Lambda$, then the effective joint map
\begin{equation}
  \chi(\rho_j \otimes \sigma)= \int_{\Omega} d\mu_{\Omega} \sqrt{\Lambda(\Omega)}\otimes Z(\Omega)(\rho_j \otimes \sigma)  \sqrt{\Lambda(\Omega)} \otimes Z(\Omega)^{\dagger}\,,
\end{equation}
is clearly invariant.  (Although the covariant measurement makes use of an external reference frame, the resulting map $\chi$ is independent of this choice of frame.) The net operation on the system $\sigma$ is given by the map
\begin{equation}
  \tau (\sigma) = \int_\Omega d\mu_\Omega \tr_j \left[ \Lambda(\Omega) \rho_j\right] Z(\Omega)\sigma Z^{-1}(\Omega)\,.
\end{equation}
Note that this operation $\tau$ is also rotationally invariant; although the measurement and subsequent unitary appear to require an external spatial reference frame, the net operation is invariant under changes of this frame.   Also, because the operation is constructed explicitly from a POVM measurement and a unitary operation conditional on this classical result, followed by tracing out the state of the reference frame, it is necessarily completely positive and trace preserving (CPTP).  (This fact is also clear by observation:  the term $\tr_j \left[ \Lambda(\Omega) \rho_j\right]$ is a normalized probability distribution weighting possible unitaries $Z(\Omega)$, and thus the map $\tau$ is a valid unital CPTP map.)

This expression for $\tau$ explicitly gives a Kraus decomposition of this map (albeit with a continuous number of Kraus operators), $\tau(\sigma)= \int_\Omega d\mu_\Omega E(\Omega) \sigma E(\Omega)^{\dagger}$, where
\begin{equation}
  E(\Omega) = \sqrt{\tr_j[ \Lambda(\Omega) \rho_j]} Z(\Omega)\,,
\end{equation}
are Kraus operators satisfying $\int_\Omega d\mu_\Omega E(\Omega)^\dag E(\Omega) = I$.  The ability of this operation to approximate the $Z$ operation on the system is defined using the \emph{gate fidelity} \cite{MPRHorodecki99, BowdreyOiShortBanaszekJones02, Nielsen02, EmersonAlickiZyczkowski05} given by
\begin{equation}
  F_{\rm gate}(Z, \tau) \equiv \frac{\int_\Omega d\mu_\Omega \bigl|\Tr[E(\Omega)^{\dagger} Z]\bigr|^2+d}{d^2+d}
\end{equation}
where $d$ (the dimension of the system on which the gate is applied) is 2 in our case.

Suppose that quantum directional reference frame $\rho_j$ is aligned with the $z$-axis. Then $\rho_j$ is a mixture of states $\ket{j,m}_{z}$ for $-j \leqslant m \leqslant j$.  To simplify our calculation, assume that $\rho_j=\ket{j, m}_z\bra{j,m}$ (i.e., we consider only a pure state, but at the end, we can generalize for any mixed state). The state $|e\rangle$ that defines the POVM could be any state, but for simplicity we consider only the case where $|e\rangle = |j,j\rangle_z$. In this case, the rotationally-invariant operation on the combined reference and system is
\begin{equation}
  \tau (\sigma, \rho_j) = (2j+1) \int_\Omega d\mu_\Omega \bigl|{}_z \langle j,m|R(\Omega)|j,j\rangle_z  \bigr|^2 Z(\Omega)\sigma Z^{-1}(\Omega)\,.
\end{equation}
The Kraus operators are then given by
\begin{equation}
  E(\Omega) = \sqrt{(2j+1)}\bigl|{}_z \langle j,m|R(\Omega)|j, j \rangle_z\bigr| Z(\Omega)\,.
\end{equation}

We can parameterize
\begin{equation}
  R(\Omega)=e^{i\alpha} \left( \begin{array}{cc} e^{i\phi} \cos \theta & e^{i\psi} \sin \theta  \\ -e^{-i\psi} \sin \theta & e^{-i\phi} \cos \theta   \\  \end{array} \right)
\end{equation}
where $0 \leq \alpha, \psi,\phi \leq 2\pi$ and $0 \leq \theta \leq \frac{\pi}{2}$. The rotation $R(\Omega)$ can be rewritten using Euler angles:
\begin{equation}
  R(\Omega)=e^{i\alpha} R_z(-\phi-\psi)R_y(-2\theta)R_z(\psi-\phi),
\end{equation}
where $R_k(\beta)$ is a clockwise rotation of angle $\beta$ around the $k$-axis. It follows that
\begin{equation}
\Tr[R(\Omega)ZR^{\dagger}(\Omega)Z]= \Tr[R_y(-2\theta)ZR_y(2\theta)Z] =2\cos{2\theta}.
\end{equation}
From~\cite{rose}, we have that
\begin{equation}
  \ _z \bra{j,m}R(\Omega)\ket{j,j}_z = e^{-i\alpha-im(\phi+\psi)-ij(\phi-\psi)} \sqrt{{2j \choose j+m }} (\cos \theta)^{j+m}(-\sin \theta)^{j-m}.
\end{equation}
Therefore,
\begin{align}
  |\Tr[E(\Omega)^{\dagger}Z] |^2&=(2j+1) {2j \choose j+m } (\cos \theta)^{2j+2m}(\sin \theta)^{2j-2m} \bigl|\Tr[R_y(-2\theta)ZR_y(2\theta)Z]\bigr|^2 \nonumber \\
  &=4 (2j+1){2j \choose j+m } (\cos \theta)^{2j+2m}(\sin \theta)^{2j-2m} (\cos{2\theta})^2,
\end{align}
where, in the first step, we used the cyclic propriety of the trace.  The Haar measure in these coordinates is $d\Omega (U(2))= (2\pi)^{-3} 2 \sin \theta \cos \theta d\theta d\alpha d\phi d\psi$.\footnote{We take the integral over the Haar measure for $U(2)$ instead of $SU(2)$ because this simplifies the notation, however, it should be noted that in this case the integral over $\alpha$ (the global phase) will not
alter the result, so we are free to do this.}  It follows that
\begin{equation}
  F_{gate}(Z, \tau) = \frac{1}{3}+ \frac{2(j+1+2m^2)}{3(j+1)(2j+3)}
\end{equation}
which is a function of the second moment of the $z$ projection of the reference frame. The result easily generalizes to the case where $\rho_j$ is a mixture of pure states of the form $\ket{j,m}_{z}$: 
\begin{equation}
  F_{gate}(Z, \tau) = \frac{1}{3}+ \frac{2(j+1+2 \Tr[\rho_j J_z^2])}{3(j+1)(2j+3)}.
  \end{equation}
This example is a case where the fidelity evolves with the expectation value of the second moment, despite the fact that the reservoir is composed of spin-$\half$ particles. Also note that in this particular case the fidelity does not depend on any odd moments because the operation depends only on the axis defined by $\hat{n}$, but not on the direction along this axis.

We now consider how the quantum reference frame degrades with repeated use in performing this operation.  We assume that the qubit systems (the reservoir) on which we apply the approximate phase gate are all in the completely mixed state $ \frac{1}{2} I_{\frac{1}{2}}$.  With each application, the reference frame evolves according to the invariant map $\xi(\rho_j)= \Tr_s \chi(\rho_j \otimes \frac{1}{2}I_{\frac{1}{2}})$, where $\Tr_s$ is the partial trace of the qubit system on which we apply the approximate phase gate.
To calculate how the second moment evolves as a function of the number of times the quantum reference frame has been used to perform the approximate phase gate, we can use Theorem \ref{th2}:
\begin{equation}
  \Tr[\xi(\rho_j) J_z^2]= A_0^{(2)}+A_2^{(2)} \Tr[\rho_j J_z^2].
\end{equation}
To find the values of the coefficients $A_0^{(2)}$ and $A_2^{(2)}$, we consider two possible initial states of the quantum reference frame.  First, suppose that $\rho_j= \frac{1}{2j+1}I_j$, which yields $\xi(\rho_j)= \frac{1}{2j+1}I_j$. This evolution gives
\begin{equation}
  \frac{j(j+1)}{3}= A_0^{(2)}+A_2^{(2)}\frac{j(j+1)}{3}.
\end{equation}
Second, using $\rho_j= \ket{j, j}_z \bra{j,j}$,
we obtain a second equation:
\begin{equation}
  \frac{j(1+j(3+j+2j^2))}{(1+j)(3+2j)}= A_0^{(2)}+A_2^{(2)}j^2.
\end{equation}
Solving those equations, we obtain
\begin{equation}
  A_0^{(2)}=j-\frac{2j}{2j+3} = j - 1 + \frac{3}{2j} - \frac{9}{4j^2} + O\left(\frac{1}{j^3}\right),
  \label{eq:azerob}
\end{equation}
and
\begin{equation}
  A_2^{(2)}=1-\frac{3(2j+1)}{(2j+3)(j+1)} = 1- \frac{3}{j} + O\left(\frac{1}{j^3}\right).
  \label{eq:atwob}
\end{equation}

From the above equations, we can deduce that the longevity of the quantum reference frame is $O(j)$ by noting that the fidelity after $k$ repetitions of the gate is 
\begin{equation}
F^{(k)}_{gate} (Z, \tau) = \frac{1}{3} + \frac{2}{3(j+1)(2j+3)}\left(j+1+ 2 A_0^{(2)} \frac{1-\left(A_2^{(2)}\right)^k}{1-A_2^{(2)}} + 2\left(A_2^{(2)}\right)^k \tr[\rho_j^{(0)} J_z^2]\right).
\label{eq:fidb}
\end{equation}

From the equations~(\ref{eq:azerob}) and~(\ref{eq:atwob}) we have that $A_0^{(2)}$ goes as $O(j)$ and $A_2^{(2)}$ goes as $O(1)$.  Looking at equation~(\ref{eq:fidb}) we see that fidelity must go as $O(\frac{1}{j})$ and therefore longevity goes as $O(j)$.  This is also seen in the numerical work shown in Figure~\ref{fig:logplots}. 
 This result appears to be in contradiction with Theorem \ref{th:long}.  However, this contradiction is resolved by the fact that one of the assumptions of the theorem is not fulfilled.  Indeed, even if the map $\chi$ is rotationally invariant, it does not conserve the total angular momentum.  In particular, note that $\bra{j, m}_z \xi_j(\ket{j,j}_z \bra{j,j}) \ket{j,m}>0$ for all $m$ in the range $-j,\ldots,j$.  Because $\zeta^{\circ 2j}(\ket{j,j}_z \bra{j,j})$ is the only map of the form $\zeta^{\circ n}(\ket{j,j}_z \bra{j,j})$ for $0<n<2j+1$ that has support in the state $\ket{j,-j}_z \bra{j,-j}$, then $q_{2j}>0$. Therefore, there is no bound $n_{max}$ independent of $j$ such $q_n =0$ for all $n>n_{max}$.

\subsubsection{Method 2}
\label{subsec:MoreGateFidelity}

We now investigate an alternate method for approximating a Pauli $Z$ operation using a quantum reference frame.  Consider the filtering operation \cite{G96}
\begin{equation}
  \Gamma = (2j+1) \int_{\Omega} d\mu_\Omega R_j(\Omega) \otimes R_{1/2}(\Omega) \Bigl[ |j,j\rangle _z \langle j,j| \otimes Z \Bigr] R_j(\Omega)^{-1} \otimes R_{1/2}(\Omega)^{-1}.
  \label{eq:filter}
\end{equation}
Observing that in spin notation
\begin{equation}
  Z=|\tfrac{1}{2},\tfrac{1}{2}\rangle _z \langle \tfrac{1}{2},\tfrac{1}{2}| - |\tfrac{1}{2},-\tfrac{1}{2}\rangle_z \langle \tfrac{1}{2},-\tfrac{1}{2}|,
\end{equation}
we can replace the operator $Z$ in equation~(\ref{eq:filter}) with this expression and using the Clebsch-Gordon coefficients we can express the part of equation~(\ref{eq:filter}) that is in the square brackets as a joint system with total angular momentum having support on the $j-\frac{1}{2}$ and $j+\frac{1}{2}$ subspaces. 
 $\Gamma$ is spatially-covariant and we can use Schur's lemma to note that it will be block-diagonal in the irreps $j+\frac{1}{2}$ and $j-\frac{1}{2}$.  Thus, we can rewrite it as
\begin{align}
  \Gamma &= (2j+1) \int_{\Omega} d\mu_\Omega R_{j+1/2}(\Omega) \Bigl[ |j+\tfrac{1}{2},j+\tfrac{1}{2}\rangle_z \langle j+\tfrac{1}{2},j+\tfrac{1}{2}| - \tfrac{1}{2j+1}|j+\tfrac{1}{2},j-\tfrac{1}{2}\rangle_z \langle j+\tfrac{1}{2},j-\tfrac{1}{2}|\Bigr] R_{j+1/2}(\Omega)^{-1}  \nonumber  \\
& \qquad \qquad \qquad  -(2j+1) \int_{\Omega} d\mu_\Omega R_{j-1/2}(\Omega)  \Bigl[ \tfrac{2j}{2j+1}|j-\tfrac{1}{2},j-\tfrac{1}{2}\rangle_z \langle j-\tfrac{1}{2},j-\tfrac{1}{2}| \Bigr] R_{j-1/2}(\Omega)^{-1}   \\
   &= (2j+1) \Bigl( \tfrac{1}{2j+2} (1-\tfrac{1}{2j+1}) I_{j+1/2} - \tfrac{1}{2j} \tfrac{2j}{2j+1} I_{j-1/2} \Bigr),
\end{align}
which simplifies to
\begin{equation}
\Gamma =( \frac{2j}{2j+2} I_{j+1/2} - I_{j-1/2}).
\end{equation}
We can define
\begin{equation}
\chi_j(\rho_j \otimes \sigma)=  \Gamma (\rho_j \otimes \sigma) \Gamma^{\dagger},
\end{equation}
where $\Pi_{j+k}$ is a projection into the subspace of total angular momentum $j+k$. However, this map is not trace preserving, nor does it conserve angular momentum. As in method 1, let $\rho_j= \ket{j,m}_z\bra{j,m}$. 
 In that case, we have Kraus operators of the form
\begin{eqnarray}
E_{m'} &=& _z \bra{j, m'} \Gamma \ket{j, m} _z
\end{eqnarray}
and, using the appropriate Clebsch Gordon coefficients, we arrive at an expression for fidelity of the form
\begin{equation}
F_{gate} = \frac{1}{3} + \frac{2}{3}\left(\frac{1}{j+1}\right)^2 m^2,
\end{equation}
which goes to $1$ in the limit $j\rightarrow \infty$, for the case $m=j$. For a general state $\rho_j$ (which must be diagonal in the basis $\{\ket{j,m}_z\ |\ -j \leqslant m\leqslant j \}$), we obtain
\begin{equation}
F_{gate} = \frac{1}{3} + \frac{2}{3}\left(\frac{1}{j+1}\right)^2 \Tr[\rho_j J_z^2],
\end{equation}

By Theorem~\ref{th2}, even though the map is not trace preserving, we have:
\begin{equation}
 \Tr[\xi(\rho_j) J_z^2] = A_0^{(2)}+A_2^{(2)} \Tr[\rho_j J_z^2],
 \end{equation}
and, by the same method as in the previous subsection, 
 we have for this map
\begin{eqnarray}
A_0^{(2)} &=& \frac{j}{(j+1)}  = 1 - \frac{1}{j} + \frac{1}{j^2} + O\left(\frac{1}{j^3}\right),  \\
A_2^{(2)} &=&  \frac{j}{(j+1)}\left(1 - \frac{3}{j(j+1)}\right) = 1- \frac{1}{j} - \frac{2}{j^2} + O\left(\frac{1}{j^3}\right),
\end{eqnarray}
where the series expansions are appropriate in the limit $j \rightarrow \infty$.

Again we find the scaling of longevity with $j$ to be $O(j)$.  (See Figure~\ref{fig:logplots}.)  However, the longevity is consistently higher than by using Method 1.

\subsubsection{Method 3}
\label{subsec:EvenMoreGateFidelity}

We might ask about performing an operation similar to the previous case, but in which trace is preserved on the map.  Such a map could be realized through a coupling between the spins of the form:
\begin{equation}
H = w\left(J_{x}^{\left(\frac{1}{2}\right)} J_{x}^{(j)} + J_{y}^{\left(\frac{1}{2}\right)} J_{y}^{(j)} + J_{z}^{\left(\frac{1}{2}\right)} J_{z}^{(j)}\right) = \frac{w}{2} \mathcal{J}^2 - \frac{w}{2}\left(j(j+1)+\frac{3}{4}\right)\openone
\label{eq:spincoupling}
\end{equation}
where $w$ is some constant, the operator $J_{x}^{\left(\frac{1}{2}\right)} = X$ is the $J_x$ operator on a spin-half system and the operator $\mathcal{J}^2 =\left(j+\frac{1}{2}\right)\left(j+\frac{3}{2}\right) \Pi_{j+\frac{1}{2}} + \left(j-\frac{1}{2}\right)\left(j+\frac{1}{2}\right) \Pi_{j-\frac{1}{2}}$ is the total angular momentum operator on the joint system of the reference and the spin-half system.  Therefore, the unitary associated with this Hamiltonian is $U=e^{i\frac{w}{2} (\mathcal{J}^2 -(j^2+j+\frac{3}{4})\openone)t}$.  Choosing $t = \frac{2\pi}{w(2j+1)}$ gives the appropriate phase shift of the spin-half system relative to the spin-$j$ reference system.

It could also be phrased in the language of the previous subsection 
 if --- up to a global phase --- we set
\begin{equation}
\Gamma =U \cong (I_{j+1/2} - I_{j-1/2}).
\end{equation}

The quality function for this map is
\begin{equation}
F_{gate} = \frac{1}{3} + \frac{2}{3}\left(\frac{2}{2j+1}\right)^2  \Tr[\rho_j J_z^2],
\end{equation}
which, when $ \Tr[\rho_j J_z^2]= j^2$, also tends to $1$ in the limit $j\rightarrow \infty$.

We have for this map
\begin{eqnarray}
A_0^{(2)} &=& 1 - \frac{1}{(2j+1)^2} = 1- \frac{1}{4 j^2}  + O\left(\frac{1}{j^3}\right),   \\
A_2^{(2)} &=&  1 - \frac{12}{(2j+1)^2} = 1 - \frac{3}{j^2} + O\left(\frac{1}{j^3}\right),
\end{eqnarray}
where, again, the series expansions are appropriate in the limit $j \rightarrow \infty$.  Observe that in this case these parameters lack any terms in $\frac{1}{j}$.

Now we find the scaling of longevity with $j$ to be $O(j^2)$, as depicted in Figure~\ref{fig:logplots}.  Figure~\ref{fig:fidplots} compares the fidelities for the three methods for constant $j$. These last three examples demonstrate the importance of choosing the gate carefully to match the objectives of the given task.

\begin{figure}[h!]
\begin{center}
\includegraphics[width=3.8in]{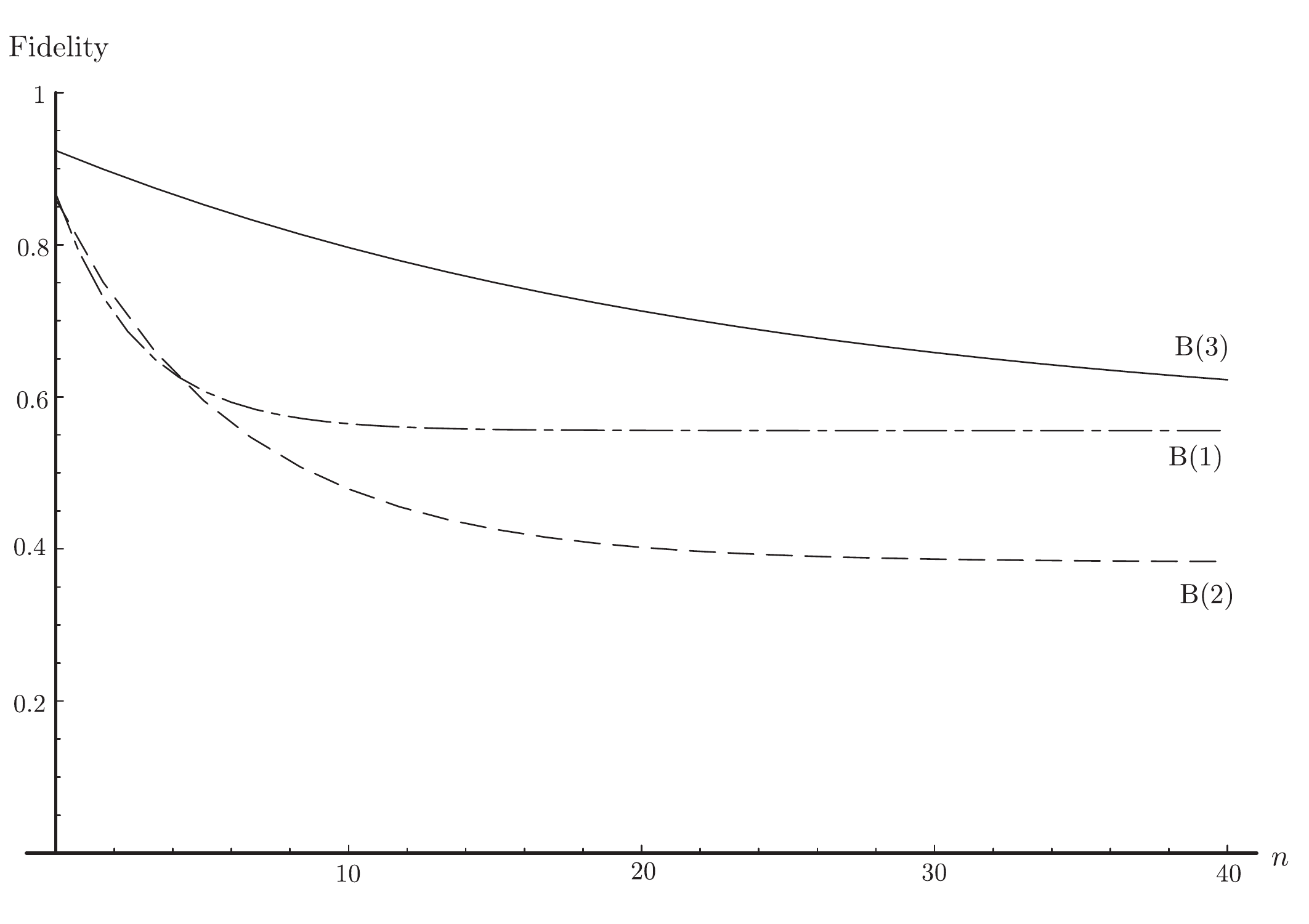}
\caption{A plot of the fidelity with number of repetitions, $n$, for $j=8$ for the three methods, B(1) (dot-dashed line), B(2) (dashed line), and B(3) (solid line).  This behavior of this value of $j$ is representative.}
\label{fig:fidplots}
\end{center}
\end{figure}

\section{Conclusion}
\label{sec:conclusion}

This study has explored the physically relevant question of how a quantum directional reference frame state evolves under the action of maps invariant with respect to rotations in space. We generalize the concept of quality of a quantum directional reference frame introduced by~\cite{brst}. We argue the quality of a reference frame must be represented by a function that depends only on the eigenvalues of the quantum reference frame or an equivalent set of parameters called the moments. We give recursive equations (Theorem~\ref{th2}) for how the moments evolve with the number of uses of the quantum reference frame. We derive sufficient some conditions (Theorem~\ref{th:long}) for the longevity of a quantum reference frame to scale by a factor proportional to square the dimension of the quantum reference frame.  Finally, we applied our results to different examples such as the use of a quantum directional reference frame to measure a spin-1 particle or  to implement an Pauli operator on a qubit. The tools that we developed can be use to compare different methods to perform some operation using a quantum reference frame as we showed in our last example.

To prove Theorem~\ref{th2}, we use Theorem~\ref{th1} which implies that maps invariant with respect to $SU(2)$ can be written in a polynomial form as a function of the Lie algebra generators. It would be interesting to investigate if a similar theorem could be applied to other Lie groups. Also, we assume in this analysis that the state of the reservoir is invariant under rotations. However, in the most general physical situation, the reservoir can be polarized. Building on the work done in~\cite{py}, it would be interesting to investigate how our theorem can be generalized to the case where the above symmetry is broken.
%




\begin{acknowledgments}
We wish to acknowledge the contribution of Terry Rudolph to show that the coefficients $q_i$'s in Theorem \ref{th1} are real. We also thank Giacomo Mauro D'Ariano, Joseph Emerson, Michael Keyl, David Kribs, David Poulin and Peter Turner for helpful discussions. J.-C.B and M.L. are supported by NSERC, and L.S. by the Mike and Ophelia Lazaridis Fellowship.  S.D.B. acknowledges the support of the Australian Research Council.
\end{acknowledgments}

\appendix
\section{Proof of Theorem 1}
\label{app:proof}

Here, we provide a proof of Theorem~\ref{th1}, inspired by ideas from~\cite{KeylWerner99, Ariano04}. First, because the group $SO(3)$ of space rotations and the group $SU(2)$ of unitary transformation are locally isomorphic up to a phase, then the spatial covariance condition can be replaced by covariance with respect to $SU(2)$.

Let $\rho_j$ be a density operator on the Hilbert space of a spin-$j$ system.  Consider the map
\begin{equation}
  \zeta(\rho_j)= \frac{1}{\lambda} (J_x \rho J_x+J_y \rho J_y+J_z \rho J_z),
\end{equation}
where $\lambda=j(j+1)$, and $J_x$, $J_y$ and $J_z$ are the angular momentum operators in the $x$, $y$ and $z$ directions for some arbitrary Cartesian frame.  First, we notice that
\begin{equation}
  \zeta(I_j) = \frac{1}{\lambda}\sum_{i=1}^3 J_i^\dag J_i =\frac{1}{\lambda}J^2=I_j,
\end{equation}
where $J$ is the total angular momentum operator.

\begin{lemma}
The map $\zeta$ is covariant with respect to the spin-$j$ irreducible representation $R_j$ of $SU(2)$.
\end{lemma}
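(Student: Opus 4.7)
The plan is to exploit the fact that the three generators $J_x, J_y, J_z$ transform as a vector under the adjoint action of $SU(2)$, so the bilinear $\sum_k J_k (\cdot) J_k$ is automatically invariant under simultaneous conjugation. Covariance of $\zeta$ is then essentially the same calculation as the invariance of the quadratic Casimir, just with $\rho_j$ sandwiched in the middle.

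Concretely, I would proceed as follows. First, recall (or briefly derive from $R_j(U) = e^{-i\vec{\theta}\cdot\vec{J}}$ and the commutation relations $[J_a,J_b]=i\epsilon_{abc}J_c$) that for every $U\in SU(2)$ there is an orthogonal matrix $O(U)\in SO(3)$ such that
\begin{equation}
R_j(U)\, J_k\, R_j(U)^{-1} \;=\; \sum_{l} O_{lk}(U)\, J_l .
\end{equation}
This is the standard ``vector operator'' transformation law and holds in every irrep $j$ because the $J_k$'s are defined by the same Lie-algebra action in each block. Next, set $\tilde{\rho}_j := R_j(U)\rho_j R_j(U)^{-1}$ and insert $R_j(U)^{-1}R_j(U)=I$ appropriately to compute
\begin{equation}
R_j(U)\,\zeta(\rho_j)\,R_j(U)^{-1}
= \frac{1}{\lambda}\sum_{k}\bigl(R_j(U)J_kR_j(U)^{-1}\bigr)\,\tilde{\rho}_j\,\bigl(R_j(U)J_kR_j(U)^{-1}\bigr).
\end{equation}

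Substituting the vector transformation law and expanding the sum yields
\begin{equation}
R_j(U)\,\zeta(\rho_j)\,R_j(U)^{-1}
= \frac{1}{\lambda}\sum_{l,m}\Bigl(\sum_{k} O_{lk}(U)O_{mk}(U)\Bigr) J_l\,\tilde{\rho}_j\,J_m .
\end{equation}
Orthogonality of $O(U)$ gives $\sum_k O_{lk}O_{mk} = \delta_{lm}$, collapsing the double sum to $\sum_l J_l\tilde{\rho}_j J_l = \lambda\,\zeta(\tilde{\rho}_j)$, which is exactly the covariance identity $R_j(U)\zeta(\rho_j)R_j(U)^{-1} = \zeta(R_j(U)\rho_j R_j(U)^{-1})$.

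There is no real obstacle here; the only step that requires a line of justification is the vector transformation law for $J_k$, and even that is immediate from differentiating $R_j(U)J_kR_j(U)^{-1}$ along a one-parameter subgroup and comparing with the action of $SO(3)$ on $\mathbb{R}^3$. Complete positivity and trace-preservation of $\zeta$ are not needed for this lemma (the problem only asks for covariance), and the prefactor $1/\lambda$ is inert under the manipulation, so the proof reduces to the orthogonality identity $O^T O = I$.
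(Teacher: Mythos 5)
Your proof is correct. The mechanism it rests on is the same one the paper exploits --- under conjugation by $R_j(\Omega)$ the generators mix among themselves by a real orthogonal matrix, and the sum $\sum_k J_k(\cdot)J_k$ is blind to such orthogonal recombination --- but the packaging differs. The paper first writes $R_j(\Omega)$ in Euler form $e^{-i\theta_1 J_z}e^{-i\theta_2 J_y}e^{-i\theta_3 J_z}$, checks covariance explicitly only for $z$-rotations (where $J_x\to\cos\theta\,J_x+\sin\theta\,J_y$, $J_y\to-\sin\theta\,J_x+\cos\theta\,J_y$, so the invariance is the identity $\cos^2\theta+\sin^2\theta=1$), appeals to the $y\leftrightarrow z$ symmetry of $\zeta$ for the $y$-rotations, and then composes. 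You instead invoke the full vector-operator (adjoint) law $R_j(U)J_kR_j(U)^{-1}=\sum_l O_{lk}(U)J_l$ with $O(U)\in SO(3)$ and collapse the double sum with $\sum_k O_{lk}O_{mk}=\delta_{lm}$, handling all group elements in one stroke with no case analysis or Euler decomposition. The paper's route is slightly more elementary (it only ever needs the two-dimensional rotation of $J_x,J_y$), while yours is more uniform and generalizes more transparently, e.g.\ to covariance of $\sum_a T_a(\cdot)T_a$ for an orthonormal set of generators of other compact groups, which is relevant to the outlook raised in the conclusion. The one step you flag as needing justification, the transformation law itself, is indeed the only nontrivial input, and your sketch (differentiate along a one-parameter subgroup and use $[J_a,J_b]=i\epsilon_{abc}J_c$) is adequate; note also that the precise index convention in $O_{lk}$ is immaterial since orthogonality gives $\delta_{lm}$ either way.
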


\begin{proof}
If $R_j$ is the $d=2j+1$ dimensional irreducible representation (irrep) of the group $SU(2)$ generated by $\{J_x,J_y,J_z\}$, then it can be expressed in the form
\begin{equation}
  R_j(\Omega)=e^{-i\theta_1 J_z}e^{-i\theta_2 J_y}e^{-i\theta_3 J_z},
   \label{eq:expandrot}
\end{equation}
for some real parameters $\theta_1$, $\theta_2$ and $\theta_3$ (which are strongly related to the Eulers angles). By the symmetry of $\zeta$ with respect to the $y$ and $z$-axes, if we prove that
\begin{equation}
  R_z(\theta)^{-1} \zeta(R_z(\theta) \rho_j R_z(\theta)^{-1} )R_z(\theta)=\zeta(\rho_j), \qquad \forall\ \theta \in [0,2\pi),
 \label{eq:covcon2}
\end{equation}
for any rotations around the $z$-axis only, then (\ref{eq:covcon2}) with any rotations around the $y$-axis follows immediately. From  (\ref{eq:expandrot}), we deduce that (\ref{eq:covcon2}) is satisfied for any rotations. Such $z$ rotations will map
\begin{eqnarray}
  J_x&\to& \cos{\theta}J_x+\sin{\theta}J_y\\
  J_y&\to& -\sin{\theta}J_x+\cos{\theta}J_y,
\end{eqnarray}
and it can be readily computed that $\zeta$ is invariant with respect to $z$ rotations. Therefore,
\begin{equation}
  R_j(\Omega)^{-1} \zeta(R_j(\Omega)\rho_j R_j(\Omega)^{-1})R_j(\Omega)=\zeta(\rho_j)  \qquad \forall\ \Omega \in SU(2).
  \label{eq:covcon}
\end{equation}
The map $\zeta$ is therefore invariant.
\end{proof}

Because a composition of invariant maps is also invariant, then any map of the form
\begin{equation}
  \xi(\rho)=q_0\rho+\sum_{k=1}^{2j} q_k\zeta^{\circ k}(\rho),
\label{eq:formco}
\end{equation}
where the values $q_i$ are real numbers and $\zeta^{\circ k}=\zeta\circ\zeta\circ\ldots\circ\zeta$, is also invariant with respect to $SU(2)$. It remains to show that any invariant map with respect to $SU(2)$ can be written on the form (\ref{eq:formco}).

To prove that every invariant map can be written as above, we use the Liouville representation of a superoperator \cite{havel}. Upon representing a $d\times d$ density matrix $\rho$ into a $d^2$ long column vector $\ketL{\rho}$ by stacking the columns of the density matrix, the action of any superoperator $\xi$ can be represented as a $d^2\times d^2$ matrix  $\mathcal{K}(\xi)$, such that
\begin{equation}
  \ketL{\xi(\rho)}= \mathcal{K}(\xi)\ketL{\rho}.
\end{equation}
This representation is necessarily basis dependent.  If a given process has Kraus operators $\{E_k\}$, the Liouville representation takes the form
\begin{equation}
  \mathcal{K}(\xi)=\sum_k E_k^*\otimes E_k
\end{equation}
where $*$ represents the complex conjugate with respect to the chosen basis.

\begin{lemma}
The Liouville representation of any map that is invariant with respect to $SU(2)$ has the form
\begin{equation}
  \mathcal{K}(\xi) = ( e^{-i\pi J_y} \otimes \mathbbm{1})  \sum_{k=0}^{2j}c_k\Pi_k (e^{i\pi J_y} \otimes \mathbbm{1}),
  \label{eq:proj}
\end{equation}
where $c_k\in\mathbb{C}$ and $\Pi_k$ is the projector into the $2k+1$ dimensional subspace of total angular momentum $k$.
\end{lemma}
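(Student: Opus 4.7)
The plan is to recast the $SU(2)$-covariance of $\xi$ as a commutation relation for the matrix $\mathcal{K}(\xi)$ and then apply Schur's lemma.

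First, I would translate the covariance condition into the Liouville picture. With the column-stacking convention $\ketL{A\rho B}=(B^T \otimes A)\ketL{\rho}$, the identity $R_j(\Omega)\,\xi(\rho_j)\,R_j(\Omega)^{-1}=\xi(R_j(\Omega)\rho_j R_j(\Omega)^{-1})$ becomes
\begin{equation}
  [\mathcal{K}(\xi),\; R_j(\Omega)^{*} \otimes R_j(\Omega)] = 0 \qquad \forall\, \Omega \in SU(2),
\end{equation}
so $\mathcal{K}(\xi)$ lies in the commutant of the representation $\Omega \mapsto R_j(\Omega)^{*} \otimes R_j(\Omega)$ of $SU(2)$.

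Next, I would exploit the fact that every irrep of $SU(2)$ is self-dual. In the standard $J_z$-eigenbasis the identities $J_z^{*}=J_z$, $J_x^{*}=J_x$, $J_y^{*}=-J_y$ combine with the Euler-angle decomposition~(\ref{eq:expandrot}) to give $R_j(\Omega)^{*} = e^{i\pi J_y} R_j(\Omega) e^{-i\pi J_y}$, since $e^{\pm i\pi J_y}$ implements the $\pi$-rotation about $y$ that flips the signs of $J_x$ and $J_z$ while preserving $J_y$. The commutation relation on $\mathcal{K}(\xi)$ is therefore equivalent to
\begin{equation}
  [\tilde{\mathcal{K}},\; R_j(\Omega) \otimes R_j(\Omega)] = 0 \qquad \forall\, \Omega \in SU(2),
\end{equation}
where $\tilde{\mathcal{K}}$ denotes the appropriate conjugate of $\mathcal{K}(\xi)$ by $e^{\pm i\pi J_y} \otimes \mathbbm{1}$.

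Finally, I would invoke the Clebsch--Gordan decomposition $R_j \otimes R_j \cong \bigoplus_{k=0}^{2j} R_k$, in which every irrep appears with multiplicity one. Because this decomposition is multiplicity-free, Schur's lemma forces $\tilde{\mathcal{K}}$ to act as a single scalar $c_k$ on each spin-$k$ isotypic block, yielding $\tilde{\mathcal{K}} = \sum_{k=0}^{2j} c_k \Pi_k$. Undoing the conjugation by $e^{\pm i\pi J_y} \otimes \mathbbm{1}$ then produces the claimed formula~(\ref{eq:proj}). The main technical hurdle is bookkeeping: one must fix a stacking convention, track transposes and complex conjugates consistently through the vectorisation identity, and verify the precise sign of the intertwiner $e^{\pm i\pi J_y}$ between $R_j$ and $R_j^{*}$ in the chosen basis. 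Once these conventions are aligned with the statement of the lemma, the Clebsch--Gordan step together with Schur's lemma finishes the argument.
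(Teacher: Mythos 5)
Your proposal is correct and follows essentially the same route as the paper: covariance becomes the statement that $\mathcal{K}(\xi)$ commutes with $R_j^*\otimes R_j$, the intertwiner $e^{\pm i\pi J_y}$ converts this to the commutant of the collective representation $R_j\otimes R_j$, and the multiplicity-free Clebsch--Gordan decomposition plus Schur's lemma gives $\sum_k c_k \Pi_k$ up to the conjugation. The sign ambiguity you flag in the intertwiner is harmless, since the two choices differ by conjugation with $e^{2i\pi J_y}=(-1)^{2j}\mathbbm{1}$, which is central.
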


\begin{proof}
The condition on a map to be invariant can thus be expressed as
\begin{equation}
  (R^*_j(\Omega) \otimes R_j(\Omega)) \mathcal{K}(\xi)= \mathcal{K}(\xi) (R^*_j(\Omega) \otimes R_j(\Omega)),
  \qquad \forall\ \Omega \in SU(2),
  \label{eq:cov}
\end{equation}
so the condition on the operators in this new representation is that $\mathcal{K}(\xi)$ must commute with $R^*_j \otimes R_j$.  If the group $SU(2)$ is represented in the $\{\ket{j,m}_{z}\}$ basis (i.e.\ the eigenstate of $J_z$), we have the relation $R^*_j(\Omega)= e^{-i\pi J_y} R(\Omega) e^{i\pi J_y}$, which implies
\begin{equation}
  (R_j(\Omega) \otimes R_j(\Omega))(e^{i\pi J_y} \otimes I) \mathcal{K}(\xi)(e^{-i\pi J_y} \otimes I) =(e^{i\pi J_y} \otimes I)  \mathcal{K}(\xi)(e^{-i\pi J_y} \otimes I)  (R_j(\Omega) \otimes R_j(\Omega))  \ \ \forall\ \Omega \in SU(2).
  \label{eq:cov2}
\end{equation}
We note that $(R_j(\Omega) \otimes R_j(\Omega))$ is the \emph{collective} representation of $SU(2)$ on two spin-$j$ systems.
A Clebsch-Gordon decomposition gives us the irrep of the group of all collective rotations $\mathcal{G}(R_j\otimes R_j)$ which take the form
\begin{equation}
  \mathcal{G}(R_j\otimes R_j)(\Omega) \simeq \bigoplus_{k=0}^{2j}R_k(\Omega) , \qquad \forall\ \Omega \in SU(2),
\end{equation}
where ``$\simeq$'' denotes ``unitarily equivalent'' and $R_k$ is the spin-$k$ irrep of $SU(2)$ which has multiplicity one.

Because we require $\mathcal{K}(\xi)$ to commute with $R_j(\Omega) \otimes R_j(\Omega)$ for any $\Omega \in SU(2)$, it must then commute with all irreps of $\mathcal{G}(R_j\otimes R_j)$.  By Schur's lemma, we have that
\begin{equation}
(e^{i\pi J_y} \otimes \mathbbm{1}) \mathcal{K}(\xi)(e^{-i\pi J_y} \otimes \mathbbm{1})  \simeq \bigoplus_{k=0}^{2j}c_k I_k
  =\sum_{k=0}^{2j}c_k\Pi_k,
\end{equation}
where $c_k\in\mathbb{C}$, $I_k$ is the $2k+1$ dimensional identity operator, and $\Pi_k$ is the projector into the $2k+1$ dimensional subspace of total angular momentum $k$.
\end{proof}

There are $2j+1$ independent projectors forming $\mathcal{K}(\xi)$.  To characterize every possible invariant mapping, we will thus also require $2j+1$ independent operators.

\begin{lemma}
The matrices $(\sum_{i} J^*_i \otimes J_i)^{k}$ for $0 \leqslant k \leqslant 2j$ are linearly independent and form a complete basis to represent any invariant map of the form~(\ref{eq:proj}).
\end{lemma}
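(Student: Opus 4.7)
The plan is a dimension count combined with a Casimir-style identification. The previous lemma shows that the space of Liouville representations of $SU(2)$-invariant maps on a spin-$j$ system is $(2j+1)$-dimensional, so it suffices to prove that the $2j+1$ operators $C^n$ with $n=0,\ldots,2j$ are linearly independent, where $C \equiv \sum_i J_i^* \otimes J_i$; these will then automatically span the space and form a basis. Since $\zeta$ admits Kraus operators $E_i = J_i/\sqrt{\lambda}$ with $J_i^\dagger = J_i$, one has $\mathcal{K}(\zeta) = \lambda^{-1} C$ and hence $\mathcal{K}(\zeta^{\circ n}) = \lambda^{-n} C^n$, so the $C^n$ are, up to overall scalars, exactly the Liouville images of the channels appearing in Theorem~\ref{th1}, and the desired classification follows once the basis property is established.

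First I would diagonalise $C$. Differentiating the identity $R_j^*(\Omega) = e^{-i\pi J_y} R_j(\Omega) e^{i\pi J_y}$ (already used in the previous lemma) at the identity of $SU(2)$ yields the uniform relation $J_i^* = -e^{-i\pi J_y} J_i e^{i\pi J_y}$ for $i=x,y,z$, so
\begin{equation*}
 C = -(e^{-i\pi J_y}\otimes I)\Bigl(\sum_i J_i\otimes J_i\Bigr)(e^{i\pi J_y}\otimes I).
\end{equation*}
The quadratic combination $\sum_i J_i \otimes J_i = \tfrac{1}{2}(\mathbf{J}_{\rm tot}^2 - 2j(j+1)I)$ is constant on each isotypic component of the Clebsch--Gordan decomposition, taking the value $\tfrac{1}{2}[k(k+1)-2j(j+1)]$ on $\Pi_k$. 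Setting $\alpha_k \equiv \tfrac{1}{2}[2j(j+1)-k(k+1)]$, this yields
\begin{equation*}
 C^n = (e^{-i\pi J_y}\otimes I)\Bigl(\sum_{k=0}^{2j} \alpha_k^n \, \Pi_k\Bigr)(e^{i\pi J_y}\otimes I),
\end{equation*}
which is already in the canonical form~(\ref{eq:proj}) supplied by the previous lemma.

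The linear independence of $\{C^n\}_{n=0}^{2j}$ therefore reduces to the linear independence of the coordinate vectors $(\alpha_0^n,\ldots,\alpha_{2j}^n)$ in $\mathbb{C}^{2j+1}$. Because $k(k+1)$ is strictly increasing for integer $k\geq 0$, the numbers $\alpha_0,\ldots,\alpha_{2j}$ are all distinct, so the $(2j+1)\times(2j+1)$ matrix $[\alpha_k^n]_{n,k=0}^{2j}$ is Vandermonde with nonzero determinant. This yields the required linear independence, and combined with the dimension count it proves that $\{C^n\}_{n=0}^{2j}$ is a basis for the space of Liouville representations of invariant maps. Passing back through $\mathcal{K}$ then gives Theorem~\ref{th1}.

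The main delicate point is establishing the conjugation identity $J_i^* = -e^{-i\pi J_y} J_i e^{i\pi J_y}$ uniformly for all three components with the correct overall sign, which is a short bookkeeping exercise using the parities of $J_x,J_y,J_z$ under complex conjugation in the standard basis together with their transformations under a $\pi$ rotation about the $y$-axis. Once that is in hand, everything reduces to the observation that $C$ is, up to a unitary conjugation and a sign, a scalar quadratic Casimir for the combined representation with distinct eigenvalues on the $2j+1$ irreducible components, which makes the Vandermonde step automatic.
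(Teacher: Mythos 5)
Your proposal is correct and follows essentially the same route as the paper: the same conjugation identity $\sum_i J_i^*\otimes J_i = -(e^{-i\pi J_y}\otimes I)(\sum_i J_i\otimes J_i)(e^{i\pi J_y}\otimes I)$, the same reduction of $\sum_i J_i\otimes J_i$ to the total angular momentum Casimir with $2j+1$ distinct eigenvalues on the Clebsch--Gordan components, and the same dimension count against the previous lemma. Your Vandermonde determinant step is just a repackaging of the paper's appeal to the fundamental theorem of algebra (no degree-$2j$ polynomial vanishes at $2j+1$ distinct points), so the two arguments are equivalent.
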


\begin{proof}
If we rewrite equation~(\ref{eq:formco}) in the Liouville representation, the map becomes
\begin{equation}
  \mathcal{K}(\xi) = \sum_{k=0}^{n} \frac{q_{k}}{\lambda^k} \left(\sum_{i \in \{x,y,z\}} J^*_i \otimes J_i\right)^{k} .
\end{equation}
The hermitian matrices $(\sum_{i} J^*_i \otimes J_i)^{k}$ are diagonal in the same basis, and the eigenvalues of $(\sum_{i} J^*_i \otimes J_i)^{k}$ are $\nu_l^k$, where $\nu_l$ is an eigenvalue of $\sum_{i} J^*_i \otimes J_i$.

To find all of the eigenvalues $\{\nu_l\}$, expand the total angular momentum $\mathcal{J}^2=\sum_{i} \left(J_i \otimes I + I \otimes J_i\right)^2$ to get a new expression for $\sum_{i} J_i \otimes J_i$:
\begin{eqnarray}
  \label{eq:expansion}
  \sum_{i} J_i \otimes J_i &=& \frac{1}{2}(\sum_{i}(J_i \otimes I + I \otimes J_i)^2 - \sum_{i} J_i^2 \otimes I - I \otimes \sum_{i} J_i^2)\nonumber\\
 &=& \frac{1}{2}(\mathcal{J}^2 -J^2 \otimes I - I \otimes J^2) \nonumber\\
 &=& \frac{1}{2}(\mathcal{J}^2 - 2j(j+1)I)
\end{eqnarray}
With the relation $\sum_{i} J^*_i \otimes J_i=-e^{-i\pi J_y}\otimes I \left(\sum_{i} J_i \otimes J_i\right) e^{i\pi J_y}\otimes I$, we thus have that $\sum_{i} J^*_i \otimes J_i$ and $\sum_{i} J_i \otimes J_i$ will have the same eigenvalues up to a negative sign.

From equation~(\ref{eq:expansion}), we find that
\begin{equation}
  \nu_l = -\frac{1}{2}\left(l(l+1) - 2j(j+1) \right),
\end{equation}
where $l(l+1)$ is the eigenvalue resulting from the vector addition of the two spin-$J$ systems, which implies that $l$ can range from $0$ to $2j$ and has multiplicity $2l+1$. There is $2j+1$ different values of $l$, so there is $2j+1$ different eigenvalues.  By the fundamental theorem of algebra, this implies that there exist no polynomial of degree $2j$ that has the eigenvalues $\{v_l\}$ as roots.  This implies the matrices $(\sum_{i} J^*_i \otimes J_i)^{k}$ for $0 \leqslant k \leqslant 2j$ are linearly independent.  By counting the number free parameters, we proved that the operators $(\sum_{i} J^*_i \otimes J_i)^{k}$ form a complete basis to represent any map represented by equation~(\ref{eq:proj}).
\end{proof}

Finally, to show that the equation~(\ref{eq:formco}) is a representation of all possible invariant maps on a spin-$j$ system, we need to show that the $q_i$'s are real.  First, note that
\begin{equation}
 \zeta(\rho_j)= \frac{1}{2 \lambda}(2J_z \rho_j J_z +J_{+}\rho_j J_{+}^{\dagger}+J_{-}\rho_j J_{-}^{\dagger}),
 \label{eq:formpm}
\end{equation}
where $J_{\pm}=J_x \pm i J_y$.  Also, from equation~(\ref{eq:formpm}), the only contribution to $_{\hat{n}}\bra{j,-j}\xi(\ket{j,j}_{\hat{n} \hat{n}}  \bra{j,j})\ket{j,-j}_{\hat{n}}$ is from $q_{2j}\zeta^{\circ 2j}(\ket{j,j}_{\hat{n} \hat{n}}  \bra{j,j})$. Actually,
\begin{equation}
  _{\hat{n}}\bra{j,-j}\xi(\ket{j,j}_{\hat{n} \hat{n}}  \bra{j,j})\ket{j,-j}_{\hat{n}} =  q_{2j}\ _{\hat{n}}\bra{j,-j}\zeta^{\circ 2j}(\ket{j,j}_{\hat{n} \hat{n}}  \bra{j,j})\ket{j,-j}_{\hat{n}}.
  \label{eq:matrixelem}
\end{equation}
Because $ \xi(\rho_j)$ must be positive, then $q_{2j}$ is a non-negative real number. The contributions to $_{\hat{n}}\bra{j,-j+1}\xi(\ket{j,j}_{\hat{n} \hat{n}}  \bra{j,j})\ket{j,-j+1}_{\hat{n}}$ are from $q_{2j}\zeta^{\circ 2j}(\ket{j,j}_{\hat{n} \hat{n}}  \bra{j,j})$ and $q_{2j-1}\zeta^{\circ 2j-1}(\ket{j,j}_{\hat{n} \hat{n}}  \bra{j,j})$. Since $ \xi(\rho_j)$ is positive and $q_{2j}$ is real, then $q_{2j-1}$ must also be real. Note that $q_{2j-1}$ could be negative. By induction, it is easy to show that all the coefficients $q_{i}$ must be real.


\begin{thebibliography}{99}


\bibitem{GI05}  N. Gisin, S. Iblisdir, {\it quant-ph/0507118}.
%
\bibitem{BIM05} E. Bagan, S. Iblisdir, and R. Munoz-Tapia, {\it Phys. Rev. A} {\bf 73} 022341 (2006).
%
\bibitem{brst} S. Bartlett, T. Rudolph, R. Spekkens, and P. Turner, \textit{New J. Phys.} \textbf{8}, 58 (2006).
%
\bibitem{py} D. Poulin and J. Yard, \textit{New J. Phys.} \textbf{9}, 156 (2007).
%
\bibitem{brs} S. Bartlett, T. Rudolph, and R. Spekkens, \textit{Rev. Mod. Phys.} \textbf{79}, 555 (2007).
%
\bibitem{RBMC04} D. Rugar, R. Budakian, H. J. Mamin, and B. W. Chui, {\it Nature} (London), {\bf 430}, 329 (2004).
%
\bibitem{SGBRZHY95} J. A. Sidles, J. L. Garbini, K. J. Bruland, D. Rugar, O. Z\"uger, S. Hoen, and C. S. Yannoni, {\it Rev. Mod. Phys.} {\bf 67}, 249 (1995).
%
\bibitem{LCW98} D. A. Lidar, I. L. Chuang, and K. B. Whaley, {\it Phys. Rev. Lett.} {\bf 81} 2594 (1998).
%
\bibitem{Ritter05} W. G. Ritter, {\it J. Math. Phys.} {\bf 46}, 082103 (2005).
%
\bibitem{brs04} S. Bartlett, T. Rudolph, and R. Spekkens, \textit{Phys. Rev. A} \textbf{70}, 032321 (2004).
%
\bibitem{NielsenChuang97} M. A. Nielsen and I. L. Chuang, \textit{Phys. Rev. Lett.} \textbf{79}, 321 (1997).

\bibitem{MPRHorodecki99} M. Horodecki, P. Horodecki, and R. Horodecki, \textit{Phys. Rev. A 60}, \textbf{1888} (1999).

\bibitem{BowdreyOiShortBanaszekJones02} M. D. Bowdrey, D. K. L. Oi, A. J. Short, K. Banaszek, and J. A. Jones, \textit{Phys. Lett. A} \textbf{294}, 258 (2002).

\bibitem{Nielsen02}   M. A. Nielsen,  \textit{Phys. lett., A} \textbf{303}, 249 (2002).

\bibitem{EmersonAlickiZyczkowski05} J. Emerson, R. Alicki, and K. Zyczkowski, \textit{J. Opt. B: Quantum Semiclass. Opt.} \textbf{7} (2005).

\bibitem{rose} M. E. Rose, {\bf Elementary Theory of Angular Momentum},  (Wiley, New York, 1957).
%
\bibitem{G96} N. Gisin, {\it Phys.~Lett.~A} {\bf 210}, 151 (1996).
%
%
%

\bibitem{KeylWerner99}  M. Keyl, and R. F. Werner, {\it J.Math.Phys.} {\bf 40}, 3283 (1999).

\bibitem{Ariano04}  G. M. D'Ariano, {\it J. Math. Phys.} {\bf 45}, 3620 (2004).

\bibitem{havel} T. F. Havel, \textit{J. Math. Phys.} \textbf{44}, $\#$2, 534 (2003).


\end{thebibliography}
\end{document}